\documentclass[acmsmall]{acmart}
\pdfoutput=1
%\documentclass[ACM_SigConf]{acmart}

% \usepackage{booktabs} % For formal tables

% % \sloppy
% % \frenchspacing
% % \pdfpagewidth=8.5in
% % \pdfpageheight=11in
% \usepackage[english]{babel}
% \usepackage{blindtext}
% \usepackage{times}
% \usepackage{balance} % for  \balance command ON LAST PAGE  (only there!)
% \usepackage{amsmath,amsfonts,amssymb}
% % \usepackage{hyperref}
% % \usepackage{amsthm}
% \usepackage{algorithm}
% \usepackage[noend]{algorithmic}
% \usepackage{graphicx}
% \usepackage{xspace}
% \usepackage{multirow}
% \usepackage{url}
% \usepackage{epsfig,subfigure}
% \usepackage{epstopdf}
%\usepackage[font={footnotesize,rm}]{caption}
%\DeclareCaptionType{copyrightbox}
%\usepackage{cite}
% \usepackage{color}
% \usepackage{upgreek}
% \usepackage{nopageno}

\usepackage{balance}  % for  \balance command ON LAST PAGE  (only there!)
\usepackage{amsfonts}
\usepackage{url}
\usepackage{algorithm}
\usepackage{algorithmic}
\usepackage{float}
\usepackage{amssymb,amsmath,epsfig,graphics,enumerate,float,subfigure}
\usepackage{graphicx}
\usepackage{color}
\usepackage{xspace}
\usepackage{multirow}

% \copyrightyear{2017} 
% \acmYear{2017} 
% \setcopyright{acmcopyright}
% \acmConference{CIKM'17 }{November 6--10, 2017}{Singapore, Singapore}
% \acmPrice{15.00}
% \acmDOI{10.1145/3132847.3133113}
% \acmISBN{978-1-4503-4918-5/17/11}

% \CopyrightYear{2017} 
% \setcopyright{acmcopyright} 
% \conferenceinfo{CIKM'17 ,}{November 6--10, 2017, Singapore, Singapore}
% \isbn{978-1-4503-4918-5/17/11}
% \acmPrice{15.00}
% \doi{https://doi.org/10.1145/3132847.3133113}

% \fancyhead{}
% \settopmatter{printacmref=false, printfolios=false}

\newcommand{\squishlisttight}{
 \begin{list}{$\bullet$}
  { \setlength{\itemsep}{0pt}
    \setlength{\parsep}{0pt}
    \setlength{\topsep}{0pt}
    \setlength{\partopsep}{0pt}
    \setlength{\leftmargin}{2em}
    \setlength{\labelwidth}{1.5em}
    \setlength{\labelsep}{0.5em}
} }

\newcommand{\squishnumlist} {
\newcounter{qcounter}
\begin{list}{\arabic{qcounter}.~}{\usecounter{qcounter}} 
{  \setlength{\itemsep}{0pt}
    \setlength{\parsep}{0pt}
    \setlength{\topsep}{0pt}
    \setlength{\partopsep}{0pt}
    \setlength{\leftmargin}{2em}
    \setlength{\labelwidth}{1.5em}
    \setlength{\labelsep}{0.5em}
}}

\newcommand{\squishend}{
  \end{list}
}

\newcommand{\eat}[1]{}
\newcommand{\jinbin}[1]{{\color{black}{#1}}}

\newcommand{\kw}[1]{{\ensuremath {\mathsf{#1}}}\xspace}

\newcommand{\stitle}[1]{\vspace{1ex} \noindent{\bf #1}}
\long\def\comment#1{}

\newcommand{\ego}{\kw{ego}-\kw{network}}
\newcommand{\egos}{\kw{ego}-\kw{networks}}

\newcommand{\discomp}{\kw{discriminative} \kw{core}}
\newcommand{\discomps}{\kw{discriminative} \kw{cores}}

\newcommand{\baseline}{\kw{baseline}}
\newcommand{\bound}{\kw{h}-\kw{core}}
\newcommand{\tcore}{\kw{t}-\kw{core}}
\newcommand{\hcore}{\kw{h}-\kw{core}}

\begin{document}

\title{Parameter-free Structural Diversity Search}
% \author{
%     Jinbin Huang\inst{1} \and Xin Huang\inst{1} \and Yuanyuan Zhu\inst{2} \and Jianliang Xu\inst{1}
% }
% \authorrunning{J. Huang et al.}
% \institute{
% Hong Kong Baptist University\\ 
% \email{\{jbhuang,xinhuang,xujl\}@comp.hkbu.edu.hk} \and
% Wuhan University\\
% \email{yyzhu@whu.edu.cn}
% }

\author{Jinbin Huang}
\affiliation{%
  \institution{Hong Kong Baptist University}
%  \streetaddress{1 Th{\o}rv{\"a}ld Circle}
  \city{Hong Kong}
  \country{China}}
\email{jbhuang@comp.hkbu.edu.hk}

\author{Xin Huang}
\affiliation{%
  \institution{Hong Kong Baptist University}
%  \streetaddress{1 Th{\o}rv{\"a}ld Circle}
  \city{Hong Kong}
  \country{China}}
\email{xinhuang@comp.hkbu.edu.hk}

\author{Yuanyuan Zhu}
\affiliation{%
  \institution{Wuhan University}
%  \streetaddress{1 Th{\o}rv{\"a}ld Circle}
  \city{Wuhan}
  \country{China}}
\email{yyzhu@whu.edu.cn}

\author{Jianliang Xu}
\affiliation{%
  \institution{Hong Kong Baptist University}
%  \streetaddress{1 Th{\o}rv{\"a}ld Circle}
  \city{Hong Kong}
  \country{China}}
\email{xujl@comp.hkbu.edu.hk}

\renewcommand{\shortauthors}{Huang et al.}

% \author{Jinbin Huang, Xin Huang, Jianliang Xu}%}

% \affiliation{Hong Kong Baptist University}
% \email{{jbhuang, xinhuang, xujl}@comp.hkbu.edu.hk}

%Parameter-free

% \begin{abstract}
% This paper investigates the problem of structural diversity search in large graphs. Previous studies of structural diversity models need a specific parameter $t$ to model distinct social contexts, which suffers from two limitations of sensitivity and inflexibility. To address these issues, we propose a parameter-free structural diversity model based on h-index, which automatically choosing the parameter $t$ for different social contexts. Efficient algorithms are proposed to find the top-k vertices with highest score of h-index based structural diversity.  Extensive experimental results verify the effectiveness of our parameter-free structural diversity  model and algorithms.
% %large-scale
% \end{abstract}

\begin{abstract}
%Recently, structural diversity is widely studied in the literature, in which different models are proposed based on a parameter $t$, i.e. $t$-size component, $t$-core and $t$-brace. However, all existing structural diversity models are sensitive to the parameter $t$ when identifying distinct social contexts. To address this, we define a novel \discomp to automatically model different social contexts and propose a parameter-free $h$-index based structural diversity model. We study the problem of top-$k$ structural diversity search and propose an efficient top-$k$ search algorithm with a well-designed upper bound. Extensive experiment results demonstrate the sensitivity of the existing core based model and verify the efficiency and effectiveness of our parameter-free structural diversity model and algorithms.
The problem of structural diversity search is to find the top-$k$ vertices with the largest structural diversity in a graph. However, when identifying distinct social contexts, existing structural diversity models (e.g., $t$-sized component, $t$-core, and $t$-brace) are sensitive to an input parameter of $t$. To address this drawback, we propose a parameter-free structural diversity model. Specifically, we propose a novel notation of \discomp, which automatically models various kinds of social contexts without parameter $t$. Leveraging on \discomps and $h$-index, the structural diversity score for a vertex is calculated. We study the problem of parameter-free structural diversity search in this paper. An efficient top-$k$ search algorithm with a well-designed upper bound for pruning is proposed. Extensive experiment results demonstrate the parameter sensitivity of existing $t$-core based model and verify the superiority of our methods.
%the efficiency and effectiveness of our parameter-free structural diversity model and algorithms.

\end{abstract}

\maketitle

\section{Introduction}\label{sec.intro}
Nowadays, information spreads quickly and widely on social networks (e.g., Twitter, Facebook). Individuals are usually influenced easily by the information received from their social neighborhoods~\cite{hlx2019community}. Recent studies show that social decisions made by individuals often depend on the multiplicity of social contexts inside his/her contact neighborhood, which is termed as \emph{structural diversity}~\cite{UganderBMK12}. Individuals with larger structural diversity, are shown to have higher probability to be affected in the process of social contagion~\cite{UganderBMK12}. Structural diversity search, finding the individuals with the highest structural diversity in graphs, has many applications such as political campaigns~\cite{huckfeldt1995citizens}, viral marketing~\cite{DBLP:conf/kdd/KempeKT03}, promotion of health practices~\cite{UganderBMK12}, facebook user invitations~\cite{UganderBMK12}, and so on.

In the literature, several structural diversity models (e.g., $t$-sized component, $t$-core and $t$-brace) need an input of specific parameter $t$ to model distinct social contexts. A social context is formed by a number of connected users. %Consider a graph $G$ and a given vertex $v$ in $G$. 
The component-based structural diversity~\cite{UganderBMK12} regards each connected component whose size is larger than $t$ as a social context. Another core-based structural diversity model is defined based on $t$-core. A $t$-core is the largest subgraph such that each vertex has at least $t$ neighbors within $t$-core. The core-based structural diversity model regards each maximal connected $t$-core as a distinct social context. Fig.~\ref{fig.intro} shows the contact neighborhood (\ego) $G_{N(v)}$ of a user $v$. All vertices and edges in \ego $G_{N(v)}$ are in solid lines. Consider the core-based structural diversity model and parameter $t=2$. Subgraphs $H_1$, $H_2$ and $H_3$ are maximal connected $2$-cores. $H_1$, $H_2$, and $H_3$ are regarded as 3 distinct social contexts. Thus, the core-based structural diversity of $v$ is 3. 

\begin{figure}[t]
%\vspace*{-0.6cm}{}

\centering 
% \vspace*{-0.8cm}
\includegraphics[width=0.4\linewidth]{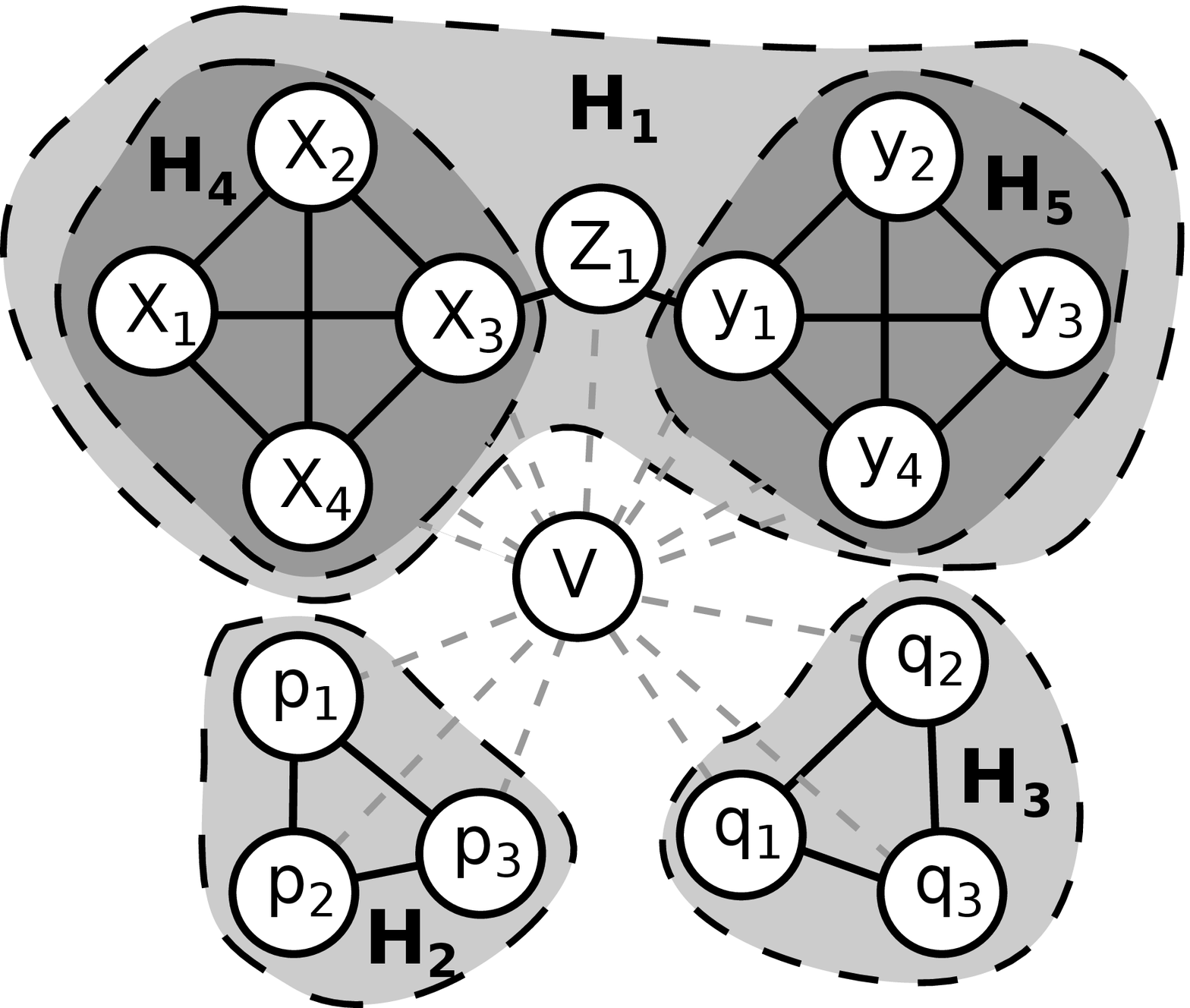}
 \vspace*{-0.2cm}
\caption{The \ego $G_{N(v)}$ of vertex $v$}
\label{fig.intro}
\vspace*{-0.4cm}
\end{figure}

This paper proposes a new parameter-free structural diversity model based on the core-based model~\cite{HuangCLQY15} and h-index measure~\cite{hirsch2005index}. Our parameter-free model does not need the input of parameter $t$ any more. This avoids suffering from the limitations of setting parameter $t$. \eat{We take core-based model as example in the following.}\jinbin{We show two major drawbacks of the $t$-core based model as follows.}

\begin{itemize}
    \vspace{-0.2cm}
    \item \textbf{Sensitivity of t-core based model}. The number of social contexts is sensitive to parameter $t$. On the one hand, if $t$ is set to a large value, it may discard small and weakly-connected social contexts; On the other hand, if $t$ is set to a small value, it may have weak ability of recognizing strongly-connected social contexts fully. Consider the contact neighborhood $G_{N(v)}$ of a user $v$ in Fig.~\ref{fig.intro}. When $t=2$, the structural diversity of $v$ is 3. When $t=3$, $H_2$ and $H_3$ are 2-cores and disqualified for social contexts, due to the requirement of social contexts as 3-core. Meanwhile, $H_1$ is decomposed as two components of 3-core as $H_4$ and $H_5$. Thus, the structural diversity of $v$ becomes 2. However, when $t\geq 4$, the structural diversity of $v$ is 0. This example clearly shows the  sensitivity of structural diversity w.r.t. parameter $t$.

    \item \textbf{Inflexibility of t-core based model}. Structural diversity model lacks flexibility for different vertices using the same parameter $t$. Generally, different social contexts should not be modeled and quantified using the same criteria of parameter $t$.  For example, in a social network, the social contexts of a famous singer and a junior student can be dramatically different in terms of size and density. Thus, it is difficult to choose one consistent value $t$ for different vertices in a graph. In Fig.~\ref{fig.intro}, $H_1$ can be  decomposed into two social contexts $H_4$ and $H_5$, which requires the setting of $t=3$. However, the identification of $H_2$ and $H_3$ requires $t=2$. This indicates the necessary of personalized parameter $t$ for different social contexts. 

\end{itemize}

To address the above two limitations, we define a novel notation of \discomp to represent each distinct social context without inputing any parameters. Specifically, a \discomp is a densest and maximal connected subgraph inside a user's contact neighborhood. It can be regarded as a criteria for representing unique and strong social context. However, the distribution of \discomps in two users' contact neighborhoods can be totally different in terms of density and quantity, which cannot be compared directly. To tackle this issue, we propose a new structural diversity model based on $h$-index. In the literature, the $h$-index is defined as the maximum number of $h$ such that a researcher has published $h$ papers whose citations have at least $h$~\cite{hirsch2005index}. We apply the similar idea to measure structural diversity in ego-networks. Given a vertex $v$, the structural diversity of $v$ is the largest number $h$ such that there exists at least $h$ \discomps with coreness at least $h$. In this paper, we study the problem of top-$k$ $h$-index based structural diversity search, which finds  $k$ vertices with largest  $h$-index based structural diversity. To summarize, we make the following contributions:
%Motivated by the above observation, we propose to use the idea of h-index for computing the structural diversty of a user. The $h$-index, or Hirsh-index, proposed by Hirsch in~\cite{hirsch2005index}, is an index for quantifying the scientific resarch output of a researcher. The $h$-index is defined as the maximum number of $h$ such that a researcher has published $h$ papers whose citations have at least $h$. It's a reasonale way for quantifying one's research output since it takes not only the number of publications but also the citations of each publication into consider. We can apply this idea to the computing of structural diversity by defining the structural diversity in the similar way, i.e. a user with core-based structural diversity $t$ implies that he/she has at least $t$ core-based social contexts with structrual diversity model threshold at least $t$.

%\textbf{Our contributions.} 
\begin{itemize}
       \vspace{-0.2cm}
    \item We propose a novel definition of \discomp to provide a parameter-free scheme for identifying social contexts. To simultaneously measure the quantity and strength of social contexts in one's contact neighborhood, we propose a new $h$-index based structural diversity model. We formulate the problem of top-$k$ $h$-index based structural diversity search in a graph. (Section~\ref{sec.problem})

    \item We propose a useful approach for computing the $h$-index based structural diversity score $h(v)$ for a vertex $v$ and give a baseline algorithm for solving the top-$k$ structural diversity search problem. (Section~\ref{sec.baseline})

    \item Based on the analysis of the \discomp structure and the property of $h$-index, we design an upper bound of $h(v)$. Equipped with the upper bound, we propose an efficient top-$k$ search framework to improve the efficiency. (Section~\ref{sec.efficientalg}).

    \item We conduct extensive experiments on four real-world large datasets to demonstrate the parameter sensitivity of the existing core-based structural diversity model and verify the effectiveness of our proposed model. Experiment results also validate the efficiency of our proposed algorithms. (Section~\ref{sec.exp})

\end{itemize}

\section{Related Work}\label{sec.relate}
This work is related to the studies of structural diversity search and $k$-core mining. 

\stitle{Structural Diversity Search.} In~\cite{UganderBMK12}, Ugander et al. studied the structural diversity models in the real-world applications of social contagion. The problem of top-$k$ structural diversity search is proposed and studied by Huang et al.~\cite{HuangCLQY13,HuangCLQY15}. The goal of the problem is to find $k$ vertices with the highest structural diversity scores. Two structural diversity models based on $t$-sized component and $t$-core respectively are studied w.r.t. a parameter threshold $t$. Recently, Chang et al.~\cite{DBLP:conf/icde/ChangZLQ17} proposed fast algorithms to address structural diversity search by improving the efficiency and scalability of the methods~\cite{HuangCLQY13}. Cheng et al.~\cite{cheng2019keyword} propose an approach of diversity-based keyword search to solve the mashup construction problem. Different from above studies, we propose a parameter-free structural diversity model based on the novel definition of \discomps, which avoids suffering from the difficulties of parameter tuning. 

\stitle{K-Core Mining.}
There exist lots of studies on $k$-core mining in the literature. $k$-core is a definition of cohesive subgraph, in which each vertex has degree at least $k$. The task of core decomposition is finding all non-empty $k$-cores for all possible $k$'s. Batagelj et al.~\cite{BatageljZ03} proposed an in-memory algorithm of core decomposition. Core decomposition has also been widely studied in different computing environment such as external-memory algorithms~\cite{JCheng11}, streaming algorithms~\cite{SariyuceGJWC13}, distributed algorithms~\cite{MontresorPM13}, and I/O efficient algorithms~\cite{WenQZLY19}. The study of core decomposition is also extended to different types of graphs such as dynamic graphs~\cite{JakmaOPF12,AridhiBMV16}, uncertain graphs~\cite{BonchiGKV14}, directed graphs~\cite{Levorato14}, temporal graphs~\cite{WuCLKHYW15}, and multi-layer networks~\cite{GalimbertiBG17}. Recently, core maintenance in dynamic graphs has attracted significant interest in the literature~\cite{LiYM14,AridhiBMV16,ZhangYZQ17}.  
% This work is highly related to the study of the top-$k$ structural diversity search and the $h$-index. 

% \textbf{Structural diversity search.} In~\cite{UganderBMK12}, Ugander et al firstly propose the definition of structural diversity in social contagion. The problem of top-$k$ structural diversity search is then study by Huang et al in~\cite{HuangCLQY13,HuangCLQY15}. The goal of the problem is to find $k$ users with the highest structural diversity score. Two structural diversity model are studied in their work with a model threshold $t$, which provides a way to identify the social contexts inside the neighborhood of a user. Recently, Chang et al in~\cite{DBLP:conf/icde/ChangZLQ17} improve the efficiency and the scalabilty of solving this problem. The problem studied in this paper is similar to the above work, but we propose the novel $h$-index based structural diversity to solve the parameter setting problem of $t$.

% \textbf{The H-index.} The $h$-index, or Hirsch index, is proposed by Hirsch in~\cite{hirsch2005index} for quantifying the research output of a researcher. The $h$-index takes the number of a researcher's publications and also the citations of each publication into consider. It's a ralatively fair way to compare the research contribution of different researchers in the same fields. Recently, the $h$-index is extended to different forms such as the $g$-index and the $h_\alpha$-index~\cite{Egghe06,Hirsch19}.

\section{Problem Statement}\label{sec.problem}
In this section, we formulate the problem of $h$-index based structural diversity search. \eat{Table~\ref{tab:notations} lists the notations that will be frequently used in the paper. }

\subsection{Preliminaries}
We consider an undirected and unweighted simple graph $G=(V,E)$, where $V$ is the set of vertices and $E$ is the set of edges. We denote $n=|V|$ and $m=|E|$ as the number of vertices and edges in $G$ respectively. W.l.o.g. we assume the input graph $G$ is a connected graph, which implies that $m\geq n-1$. 
For a given vertex $v$ in a subgraph $H$ of $G$, we define $N_H(v)=\{u$ in $H : (u,v)\in E(H)\}$ as the set of neighbors of $v$ in $H$, and $d_H(v)=|N_H(v)|$ as the degree of $v$ in $H$. We drop the subscript of $N_G(v)$ and $d_G(v)$ if the context is exactly $G$ itself, i.e. $N(v)$, $d(v)$. The maximum degree of graph $G$ is denoted by $d_{max}=\max_{v\in V} d_G(v)$.

Given a subset of vertices $S\subseteq V$, the subgraph of $G$ induced by $S$ is denoted by $G_S=(S,E(S))$, where the edge set $E(S)=\{(u,v)\in E: u,v\in S \}$. \eat{We denote the degree of a vertex $v$ in a subgraph $S$ by $d_{G_S}(v)$.} Based on the definition of induced subgraph, we define the \ego~\cite{DBLP:journals/apin/DingZ18,mcauley2014discovering} % of a vertex $v$
  as follows.

% \begin{table}[t]
% \begin{center}\vspace*{-0.4cm}
% \scriptsize
% %\small
% \caption[]{\textbf{Frequently Used Notations}}\label{tab:notations}
% %\vspace*{-0.2cm}
% \begin{tabular}{|c|c|}
% \hline
% Notation & Description \\ \hline \hline
%  %\xin{Rewrite the whole table of notations.} &\xin{Rewrite the whole table of notations.}
% $G = (V, E)$ &  An undirected simple graph $G$\\ \hline
% $n;m$ & The number of vertices/edges in $G$ \\ \hline
% $N_H(v)$	& The set of neighbors of $v$ in $H$\\ \hline
% % $d(v)$	& Degree of $v$ in $G$\\ \hline
% $d_H(v)$ & Degree of $v$ in graph $H$\\ \hline
% $G_{N(v)}$	& An \ego of $v$\\ \hline
% % $m_v$ & The number of edges in $G_{N(v)}$ \\ \hline
% $\triangle_{uvw}$	& Triangle formed by vertices $u, v, w$\\ \hline
% $\varphi(H)$; $\varphi_H(v)$	& Coreness of graph $H$; Coreness of vertex $v$ in $H$  \\ \hline
% % $\varphi_H(v)$	& Coreness of vertex $v$ in $H$\\ \hline
% $h(v)$	& $h$-index based Structural diversity of $v$ \\ \hline
% $\widehat{h}(v)$    & upper bound of $h$-index based structural diversity of $v$ \\ \hline
% \end{tabular}\vspace*{-0.6cm}
% \end{center}
% \end{table}

\begin{definition}{(Ego-network)}
Given a vertex $v$ in graph $G$, the \ego of $v$ is the induced subgraph of $G$ by its neighbors $N(v)$, denoted by $G_{N(v)}$.
\end{definition}

In the literature, the term ``neighborhood induced subgraph''~\cite{HuangCLQY15} is also used to describe the \ego of a vertex. For example, consider the graph $G$ in Fig.~\ref{fig.intro}. The \ego of vertex $v$ is shown in the gray area of Fig.~\ref{fig.intro}, which excludes $v$ itself with its incident edges. The $t$-core of a graph $G$ is the largest subgraph of $G$ in which all the vertices have degree at least $t$. However, the $t$-core of a graph can be disconnected, which may not be suitable to directly depict social contexts. Hence, we define the connected $t$-core as follows. 

\begin{definition}{(Connected $t$-Core)}
Given a graph $G$ and a positive integer $t$, a subgraph $H\subseteq G$ is called a connected $t$-Core iff $H$ is connected and each vertex $v\in V(H)$ has degree at least $t$ in $H$.
\end{definition}

% Based on the definition of connected $t$-core, we define the maximal connected $t$-core as follow:

% \begin{definition}{(Maximal connected $t$-core)}
% Given a connected $t$-core $H\subseteq G$, $H$ is a maximal connected $t$-core iff there exist no connected $t$-core $H'\subseteq G$ such that $H\subset H'\subseteq G$. 
% \end{definition}

Given a parameter $t$, the core-based structural diversity model treats each maximal connected $t$-core as a distinct social context~\cite{HuangCLQY15}\cite{UganderBMK12}. To measure the structural diversity of an \ego, one essential step is to tune a proper value for parameter $t$. However, such parameter setting is not easy and even critically challenging. The following example illustrates it. 

% Based on the above definitions, the core-based structural diversity can be defined as follow. The same structural diversity is define by Huang et al in~\cite{HuangCLQY15}.

% \begin{definition}(Core-based structural diversity)
% Given a graph $G$ and an integer $t$ where $1\leq t\leq d_{max}$, the core-based structural diversity of a vertex $v$ is the number of maximal connected $t$-core in the \ego $G_{N(v)}$ of $v$.
% \end{definition}

\begin{example}
Fig.~\ref{fig.intro} shows an \ego $G_{N(v)}$ of vertex $v$. Given an integer $t=2$, three maximal connected $2$-core ($H_1$, $H_2$ and $H_3$) will be treated as distinct social contexts. The core-based structural diversity of $v$ is 3. When we set $t=3$, the core-based structural diversity of $v$ will be 2, since $H_4$ and $H_5$ will be treated as two distinct social contexts. In this case, $H_2$ and $H_3$ are no longer treated as social contexts. 
% However, for some high degree vertices in $G$, there may be denser structures in their \ego, which means that parameter $t$ should be tuned to higher values.
If we set $t$ to be some values higher than $3$, no social contexts can be identified. The core-based structural diversity of $v$ will then be 0. From this example, we can see that if the value of $t$ is tuned too high, no social contexts can be identified. But if the value of $t$ is set too low, some strong social contexts with denser structures cannot be captured. Thus, to choose a proper value of $t$ for all vertices in a graph is a challenging task. 
%To tackle this issue, we proposed a parameter-free scheme for accurately identifying social contexts in one's \ego by giving the definition of \discomp.
\end{example}

To tackle the above issue, we propose a parameter-free scheme for automatically identifying strong social contexts in one's \ego. We firstly give a novel definition of \discomp based on the concept of coreness as follows. 

\begin{definition}{(Coreness)}\label{def.coreness}
Given a subgraph $H\subseteq G$, the coreness of $H$ is the minimum degree of vertices in $H$, denoted by $\varphi(H)=\min_{v\in H}\{d_H(v)\}$. The coreness of a vertex $v\in V(G)$ is $\varphi_G(v)=\max_{H\subseteq G, v\in V(H)}\{\varphi(H)\}$. 
\end{definition}

\begin{definition}(Discriminative Core)
\label{def.discore}
%Given a graph $G$, a subgraph $H\subseteq G$ and its coreness $\varphi(H)=t$, $H$ is a \discomp iff 1)$H$ is a maximal connected $t$-core, 2)there exists no subgraph $H'$ such that $H'\subseteq H$ and $\varphi(H') > \varphi(H)$.
Given a graph $G$ and a subgraph $H\subseteq G$, $H$ is a \discomp if and only if $H$ is a maximal connected subgraph such that there exists no subgraph $H' \subseteq H$ with $\varphi(H') > \varphi(H)$.
\end{definition}

By Def.~\ref{def.discore}, a \discomp $H$ is a maximal connected component that cannot be further decomposed into smaller subgraphs with a higher coreness. It indicates that a \discomp is the densest and most important component of a social context, which can be used as a distinct element  to represent a social context. 
%identifies a distinct and strong social context in a more accurate way. 
In addition, the coreness of a \discomp  reflects the strength of its representative social context. For example, $H_4$ is a \discomp with $\varphi(H_4)=3$. And $H_2$ is another \discomp with $\varphi(H_2)=2$. According to the core-based structural diversity, they cannot be identified as distinct social contexts simultaneously using the same value of parameter $t$. But by our \discomp definition, they will be treated as distinct social contexts automatically without loosing the information of their strength.
%The coreness $\varphi(H)$ of a \discomp $H$ is a criteria for indicating how dense is the component's cohesive structure. 

%To some extend, it can be regarded as the importance of this component (social context) which is similar to the citation of a paper in the $h$-index definition. Thus, we can compute the number of distinct \discomp according to their coreness. For a vertex $v$, we can obtain all \discomp from its \ego $G_{N(v)}$. For any given integer $t\geq 1$, we can compute the number of \discomp with coreness $t$ as $c_v(t)=|\{H\subseteq G_{N(v)}:$ H is a \discomp and $\varphi(H)=t\}|$. Inspired by the idea of $h$-index, we define the diversity vector and the $h$-index based structural diversity score in the following.

% For an \ego $G_{N(v)}$, the whole network may consist of multiple \discomps with various corenesses, which can be depicted as a coreness distribution of discriminative cores. Moreover, to quantify the structural diversity of an \ego, it is difficult to directly compare the coreness distributions of two \egos. Because it is not easy to measure both the number of social contexts and the strength of social contexts simultaneously. 

For an \ego $G_{N(v)}$, the whole network may consist of multiple \discomps with various corenesses, which can be depicted as a coreness distribution of discriminative cores. Moreover, to rank the structural diversity of two vertices, it is difficult to directly compare the coreness distributions of two \egos. Because it is not easy to measure both the number of social contexts and the strength of social contexts simultaneously. 

%The reason is similar with the problem of comparing the citation distribution of two scholars. 
%We adopt the $h$-index
Making use of the idea of $h$-index criteria, we define the diversity vector and diversity score as follows.

%Thus, each \ego has a coreness distribution of discriminative cores

%To depict the coreness distribution of an \ego, we define the diversity vector and score below. 

\begin{definition}(Diversity Vector and Diversity Score)\label{def-hindex}
Given a graph $G$ and a vertex $v$, the diversity vector of $v$ is the coreness distribution of discriminative cores in $G_{N(v)}$, denoted by $\mathcal{C}(v)=[c_v(1),...,c_v(n)]$, where $c_v(r) = |\{H: \varphi(H) =r$ and $H$ is a discriminative  core in  $G_{N(v)}\}|$. The $h$-index based structural diversity score of $v$, denoted by $h(v)$, is defined as $h(v)=\max\{r : \sum_{r}^nc_v(r)\geq r\}$. For short, diversity score is called.
%the $i$-th element in the vector represents the number of \discomp with coreness $i$ in $G_{N(v)}$.
\end{definition}

% Based on the diversity vector, we define the $h$-index based structural diversity as follows. 

% \begin{definition}($h$-index based Structural Diversity Score)\label{def-hindex}
% Given a vertex $v$ in $G$ and its diversity vector $C_v$, the $h$-index based structural diversity score of $v$, denoted by $h(v)$, is defined as $h(v)=max\{t : \sum_{t}^nc_v(t)\geq t\}$.
% \end{definition}

\begin{example}
Consider the \ego of $v$ shown in Fig.~\ref{fig.intro}, subgraph $H_1$ is not a $2$-core discriminative component since it can be further decomposed into two $3$-cores $H_4$ and $H_5$. There is no \discomp with the coreness of 1, so $c_v(1)=0$. And $c_v(2)=2$ since it has two \discomps $H_2$ and $H_3$ with the coreness of 2. Similarly, $c_v(3)=2$ because $H_4$, $H_5$ are two \discomps with the coreness of 3. There exists no \discomps with coreness greater than 3. Thus, the diversity vector of $v$ is $\mathcal{C}(v)=[0,2,2,0,...,0]$. And the diversity score is $h(v)=2$ by definition.
\end{example}

In this paper, we study the problem of $h$-index based structural diversity search in a graph. The problem formulation is defined as follows. 

\begin{flushleft}
\textbf{Problem Formulation.} Given a graph $G$ and an integer $k$, the goal of $h$-index based structural diversity search problem is to find an optimal answer $S^*$ consisted of $k$ vertices with the highest $h$-index based structural diversity scores, i.e., 
$$S^* = \mathop{\arg\max}_{S\subseteq V, |S|=k} \{\min_{v\in S} h(v) \}.$$
%\{\min_{v\in S} h(v)\}.
\end{flushleft}

% \section{Problem Analysis}\label{sec.analysis}
% % \input{analysis}

\section{Baseline Algorithm}\label{sec.baseline}
% \section{GVDO Algorithm}\label{sec.algorithm}
In this section, we introduce a baseline approach for h-index based structural diversity search over graph $G$.  The high-level idea is to compute the diversity score for each vertex in graph $G$ one by one. After obtaining the scores of all vertices, it sorts vertices in decreasing order of their scores and returns the first $k$ vertices with the highest structural diversity scores. This method computes the top-$k$ result from scratch, which is intuitive and straightforward to obtain answers.

In the following, we first introduce an existing algorithm of core decomposition~\cite{BatageljZ03}. Then, we present an important and useful procedure to compute h-index based structural diversity score $h(v)$ for a given vertex $v$. 

\subsection{Core Decomposition}
%We start with an existing and useful algorithm of core decomposition, which is presented in Algorithm \ref{algo:core-decomp}.
The core decomposition of graph $G$ computes the coreness of all vertices $v\in V$. Algorithm~\ref{algo:core-decomp} outlines the algorithm of core decomposition~\cite{BatageljZ03}.
% We start with the core decomposition algorithm for computing the coreness of each vertex in a given graph. 
The algorithm starts with an integer $t=1$, and iteratively removes the nodes with degree less than  $t$ and their incident edges. The number of $t-1$ is assigned to be the coreness of the removed vertices. Then, the degree of affected vertices needs to be updated, since the removal of a vertex  decreases the degree of its neighbors in the remaining  graph. The number $t$ is increased by one after each iteration, until all vertices and edges are deleted from the input graph. 
% The core decomposition algorithm is presented in Algorithm \ref{algo:core-decomp}.

% give a baseline solution to our top-$k$ structural diversity search problem and propose the algorithm for computing the $h$-index based structural diversity score $h(v)$.

% A naive method for solving the top-k structural diversity search problem is to iteratively compute the structural diversity score for all $v$ in $G$ and return the $k$ vertices with the highest structural diversity by sorting. The most important step of this baseline method is the computing of the $h$-index based structural diversity score $h(v)$ for each vertex $v$. Here we propose an algorithm for computing the score.

\subsection{Computing $h(v)$}
% The computing of computing $h(v)$ is as follows. 
The computation of $h(v)$ includes three major steps. First, we extract from graph $G$ and obtain an \ego $G_{N(v)}$ for vertex $v$, which is the induced subgraph of $G$ by the set of $v$'s neighbors $N(v)$. Next, we decompose the entire  \ego $G_{N(v)}$ into several discriminative cores, and count their corenesses to derive structural diversity vector $\mathcal{C}(v)$. The detailed procedure is outlined in Algorithm~\ref{algo:disc-decomp}.  Finally, based on the diversity vector of $\mathcal{C}(v)$, we compute the diversity score $h(v)$ by the Def.~\ref{def-hindex} using Algorithm~\ref{algo:baseline-score}.

%\stitle{Core Decomposition.} We start with an existing and useful algorithm of core decomposition. In this subsection, we introduce an approach for computing the structural diversity score $h(v)$ for a vertex $v$ in $G$. We start with the core decomposition algorithm for computing the coreness of each vertex in a given graph. The algorithm starts with an integer $t=1$, and iteratively removes the nodes with degree $< t$ and their incident edges. The number $t-1$ will be assigned to be the coreness of the removed vertices. Then the degree of the affected vertices will be updated, since the removal of a vertex will decrease the degree of its connected neighbors in the remaining  graph. The integer $t$ will increase by 1 after each iteration, until all vertices and edges are deleted in the input graph. The core decomposition algorithm is presented in Algorithm \ref{algo:core-decomp}.

\begin{algorithm}[t]
\small
\caption{Core Decomposition~\cite{BatageljZ03}} \label{algo:core-decomp}
\begin{flushleft} 
\textbf{Input:} a graph $G=(V, E)$\\
\textbf{Output:} the coreness $\varphi_G(v)$ for each vertex $v\in V$ \\
\end{flushleft}
% \vspace*{-0.3cm}
\begin{algorithmic}[1]

\STATE  $\mathcal{L} \leftarrow$ Sort all vertices in $G$ in ascending order of their degree.

\STATE Let $t \leftarrow 1$;

\STATE  \textbf{while} $G$ is not empty \textbf{do}

\STATE \hspace{0.3cm} \textbf{for} each vertex $v\in \mathcal{L}$ with $d(v)<t$ \textbf{do}

\STATE \hspace{0.6cm} Remove $v$ and its incident edges from $G$; Remove $v$ from $\mathcal{L}$;

\STATE \hspace{0.6cm} $\varphi_G(v) \leftarrow t-1$;

\STATE \hspace{0.6cm} Update the degree of the affected vertices and reorder $\mathcal{L}$;

\STATE \hspace{0.3cm} $t \leftarrow t+1$;

\STATE  \textbf{return} $\varphi_G(v)$ for each vertex $v\in V$; 

\end{algorithmic}
\end{algorithm}

\stitle{Discriminative Core Decomposition.} 
% Leveraged by the core decomposition algorithm, we are able to compute the coreness of each vertex in the $G_{N(v)}$. What's more, the discriminative components with different coreness $t$ can then be worked out. To compute the structural diversity score $h(v)$ for $v$, the most intuitive way is to figure out the discriminative components with  coreness $t=1$ to $t=d(v)$ and compute the diversity vector $C_v$. The structural diversity score $h(v)$ can then be computed by definition \ref{def-hindex}.
Algorithm~\ref{algo:disc-decomp} outlines the detailed steps for discriminative core decomposition and diversity vector computation. 
% For a vertex $v$, we firstly extract its \ego $G_{N(v)}$ and apply the core decomposition algorithm on it to calculate the coreness of each vertex in $G_{N(v)}$ (line 1-2). % For each integer $t$ from 1 to the maximum coreness of the vertices in $G_{N(v)}$, we find all the maximal connected $t$-core from $G_{N(v)}$ (line 4-6). Next, for each maximal connected $t$-core, check if it is a discriminative component (there exists no vertices with coreness $>t$ in a discriminative component with coreness $t$) (line 7-11).
For an \ego $G_{N(v)}$ of vertex $v$, we firstly apply the core decomposition algorithm on it to calculate the coreness of each vertex (line 1). Then, we sort all vertices in $G_{N(v)}$ in ascending order of their coreness (line 3). For each integer $t$ from 1 to the maximum coreness of the vertices in $G_{N(v)}$, we identify and count the number of \discomps with the coreness of $t$ by using a breadth first search approach (lines 5-19). By definition, a \discomp with the coreness of $t$ will be only formed by the vertices with the coreness of exactly $t$. Thus, in each iteration, we traverse vertices with the same coreness of $t$ to search all the \discomps $H$s with $\varphi(H)=t$ (lines 7-19 and lines 14-15). Edges connecting the current visited vertex $x$ to the vertices with coreness greater than $t$ indicate that the current found component can not be counted as a \discomp and $x$ does not belong to any \discomps in $G_{N(v)}$ (lines 16-17). Then the $t$-th element $c_v(t)$ of the diversity vector $\mathcal{C}(v)$ can be computed (lines 18-19). Finally, the diversity vector $\mathcal{C}(v)$ of $v$ will be returned.

\begin{algorithm}[t]
\small
\caption{Discriminative Core Decomposition} \label{algo:disc-decomp}
\begin{flushleft} 
\textbf{Input:} an \ego $G_{N(v)}=(N(v),  \{(u,w)\in E: u,w\in N(v)\}))$\\
\textbf{Output:} the diversity vector $\mathcal{C}(v)$\\
\end{flushleft}
\begin{algorithmic}[1]

% \STATE  Extract the \ego $G_{N(v)}$ of $v$;

\STATE 	Apply the core decomposition algorithm in Algorithm ~\ref{algo:core-decomp} on $G_{N(v)}$;

\STATE 	$t_{max}= \max_{u\in N(v)}\varphi_{G_{N(v)}}(u)$;

\STATE $\mathcal{L} \leftarrow$ Sort all vertices in $G_{N(v)}$ in ascending order of their coreness;

\STATE $Q \leftarrow \emptyset$; $visited \leftarrow \emptyset$

\STATE \textbf{for} $t \leftarrow 1$ to $t_{max}$ \textbf{do}

\STATE \hspace{0.3cm} $c_{v}(t) \leftarrow 0$;

\STATE \hspace{0.6cm} \textbf{for} each vertex $u\in \mathcal{L}$ with the coreness of $\varphi_{G_{N(v)}}(u)=t$ \textbf{do}

\STATE \hspace{0.6cm} \hspace{0.3cm} $Flag \leftarrow$ \textbf{true};

\STATE \hspace{0.6cm} \hspace{0.3cm} \textbf{if} $u\notin visited$ \textbf{then};

\STATE \hspace{0.6cm} \hspace{0.3cm} \hspace{0.3cm} $visited \leftarrow visited\cup \{u\}$; $Q.push(u)$;

\STATE \hspace{0.6cm} \hspace{0.3cm} \hspace{0.3cm} \textbf{while} $Q$ is not empty \textbf{do}

\STATE \hspace{0.6cm} \hspace{0.3cm} \hspace{0.3cm} \hspace{0.3cm} $x \leftarrow Q.pop()$;

\STATE \hspace{0.6cm} \hspace{0.3cm} \hspace{0.3cm} \hspace{0.3cm} \textbf{for} each $y\in \{y : (x,y)\in E(G_{N(v)})\}$ \textbf{do}

\STATE \hspace{0.6cm} \hspace{0.3cm} \hspace{0.3cm} \hspace{0.3cm} \hspace{0.3cm} \textbf{if} $\varphi_{G_{N(v)}}(y)=t$ \textbf{then}

\STATE \hspace{0.6cm} \hspace{0.3cm} \hspace{0.3cm} \hspace{0.3cm} \hspace{0.3cm} \hspace{0.3cm} Insert $y$ to $Q$ and $visited$ if $y$ is unvisited;

\STATE \hspace{0.6cm} \hspace{0.3cm} \hspace{0.3cm} \hspace{0.3cm} \hspace{0.3cm} \textbf{else if} $\varphi_{G_{N(v)}}(y)>t$ \textbf{then}

\STATE \hspace{0.6cm} \hspace{0.3cm} \hspace{0.3cm} \hspace{0.3cm} \hspace{0.3cm} \hspace{0.3cm} $Flag \leftarrow$ \textbf{false};

\STATE \hspace{0.6cm} \hspace{0.3cm} \hspace{0.3cm} \textbf{if} $Flag =$ \textbf{true} \textbf{then};

\STATE \hspace{0.6cm} \hspace{0.3cm} \hspace{0.3cm} \hspace{0.3cm} $c_{v}(t) \leftarrow c_{v}(t)+1$;

\STATE \textbf{return} $\mathcal{C}(v)$;

% \STATE  \textbf{for} $t \leftarrow 1$ to $t_{max}$

% \STATE  \hspace{0.3cm}	$c_{v}(t) \leftarrow 0$

% \STATE  \hspace{0.3cm}	Find all maximal connected $t$-core as $\mathcal{L}_{t}$;

% \STATE  \hspace{0.3cm}	\textbf{for} each $L\in \mathcal{L}_{t}$ \textbf{do}

% \STATE  \hspace{0.3cm}	\hspace{0.3cm}	Flag $\leftarrow$ \textbf{true};

% \STATE  \hspace{0.3cm}	\hspace{0.3cm}	\textbf{for} each $u\in V(L)$

% \STATE  \hspace{0.3cm}	\hspace{0.3cm}	\hspace{0.3cm}	\textbf{if} $\varphi_{G_{N(v)}}(u)> t$ \textbf{then}

% \STATE  \hspace{0.3cm}	\hspace{0.3cm}	\hspace{0.3cm}	\hspace{0.3cm}	Flag $\leftarrow$ \FALSE; \textbf{break}; 
% \STATE  \hspace{0.3cm}	\hspace{0.3cm}	\textbf{if} Flag $=$ \textbf{true} \textbf{then}

% \STATE  \hspace{0.3cm}	\hspace{0.3cm}	\hspace{0.3cm}	$c_{v}(t) \leftarrow c_{v}(t)+1$;

% \STATE  $h(v) \leftarrow 0$;

% \STATE  \textbf{for} $t \leftarrow t_{max}$ to $1$

% \STATE  \hspace{0.3cm}	$h(v) \leftarrow h(v)+ c_{v}(t)$

% \STATE  \hspace{0.3cm}	\textbf{if} $h(v) \geq t$ \textbf{then} $h(v)\leftarrow t$; \textbf{break};

% \STATE  \textbf{return}  $h(v)$;

\end{algorithmic}
\end{algorithm}

\begin{algorithm}[t]
\small
\caption{Compute $h(v)$} \label{algo:baseline-score}
\begin{flushleft} 
\textbf{Input:} a graph $G=(V, E)$; a vertex $v$\\
\textbf{Output:} the diversity score $h(v)$\\
\end{flushleft}
\begin{algorithmic}[1]

\STATE  Extract the \ego $G_{N(v)}$ of $v$;

\STATE 	$\mathcal{C}(v) \leftarrow$ Apply the discriminative core decomposition procedure in Algorithm~\ref{algo:disc-decomp} on $G_{N(v)}$;

\STATE  $h(v) \leftarrow 0$;

\STATE  \textbf{for} $t \leftarrow t_{max}$ to $1$ \textbf{do}

\STATE  \hspace{0.3cm}	$h(v) \leftarrow h(v)+ c_{v}(t)$

\STATE  \hspace{0.3cm}	\textbf{if} $h(v) \geq t$ \textbf{then} $h(v)\leftarrow t$; \textbf{break};

\STATE  \textbf{return}  $h(v)$;

\end{algorithmic}
\end{algorithm}

\stitle{H-index Score Computation.} The details of computing the $h$-index based structural diversity score are shown in Algorithm~\ref{algo:baseline-score}. After figuring out the diversity vector $\mathcal{C}(v)$ (lines 1-2), the  diversity score $h(v)$ can then be calculated by Def.~\ref{def-hindex} (lines 3-6). We firstly initialize $h(v)$ as 0 (line 3). Then, for each element $c_v(t)$ in the reverse order of the diversity vector $\mathcal{C}(v)$, we keep accumulating it to $h(v)$ until the first $t$ appears such that $h(v)\geq t$ (line 4-6). Such $t$ is the diversity score $h(v)$ of $v$.

Equipped with Algorithm~\ref{algo:baseline-score}, we are able to compute the $h$-index based structural diversity for all the vertices in $G$. By sorting the diversity scores, we can obtain the top-$k$ results for a given $k$. 

\section{Efficient Top-$k$ Search Algorithm}\label{sec.efficientalg}
The drawback of baseline method presented in the previous section is obviously inefficient and can be improved. \eat{First, it iteratively computes the $h$-index based structural diversity scores for all vertices on the entire graph $G$, which is expensive. Some vertices unqualified for the top-$k$ answer should be avoided for score computations. Moreover, the processes of \ego extraction and discriminative core decomposition are also costly in computation.}\jinbin{Firstly, both the \ego extraction and discriminative core decomposition are costly in computation. Secondly, it iteratively computes the $h$-index based structural diversity scores for all vertices on the entire graph $G$, which is expensive. Thirdly, some vertices appear to be obviously unqualified for the top-$k$ result. And the score computations of them are reluctant and should be avoided.} 

In this section, we develop an efficient top-$k$ search framework by exploiting useful pruning techniques to reduce the search space, leading to a small number of candidate vertices for score computations. Specifically, we design an upper bound $\widehat{h}(v)$ for diversity score $h(v)$, based on the analysis of the core structure. 
% Equipped with the upper bound, we propose an efficient top-$k$ structural diversity search framework for pruning the searching space.

\subsection{An Upper Bound of $h(v)$}
We starts with a structural property of $t$-core.

\begin{lemma}\label{lemma.globalcore}
Given a vertex $v$ and any vertex $u \in N(v)$, if $u$ has $\varphi_{G_{N(v)}}(u)=r$ in \ego $G_{N(v)}$, then $u$ has the coreness $\varphi_G(u)\geq r+1$ in graph $G$.
\end{lemma}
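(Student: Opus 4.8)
The plan is to exploit the fact that the ego-network $G_{N(v)}$ sits inside $G$ together with the vertex $v$ itself, which is adjacent to every vertex of $N(v)$ by definition of the ego-network. First I would let $H \subseteq G_{N(v)}$ be a subgraph witnessing the coreness of $u$ inside the ego-network, i.e. a connected subgraph with $u \in V(H)$ and $\varphi(H) = r$, so that every vertex $w \in V(H)$ has $d_H(w) \geq r$. The key construction is to form $H^+ = G[V(H) \cup \{v\}]$, the subgraph of $G$ induced on $V(H)$ together with $v$.

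Next I would verify that $H^+$ witnesses coreness $r+1$ for $u$ in $G$. Since $V(H) \subseteq N(v)$, the vertex $v$ is adjacent in $G$ to every vertex of $V(H)$; hence in $H^+$ each original vertex $w \in V(H)$ gains $v$ as an extra neighbor, giving $d_{H^+}(w) \geq d_H(w) + 1 \geq r+1$ (note $d_H(w)$ is preserved because $H^+$ is an induced subgraph of $G$ and all edges of $H$ lie in $G$). As for $v$ itself, it is adjacent to all $|V(H)| \geq r+1$ vertices of $V(H)$ in $H^+$, so $d_{H^+}(v) \geq r+1$ as well — here I use that $|V(H)| \geq r+1$ because $H$ has minimum degree $r$ and therefore at least $r+1$ vertices. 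Thus $\varphi(H^+) \geq r+1$, and since $u \in V(H^+)$, Definition~\ref{def.coreness} gives $\varphi_G(u) \geq \varphi(H^+) \geq r+1$.

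The main obstacle — really the only subtlety — is making sure the degree bookkeeping is airtight: one must confirm that passing from $H$ (a subgraph of the ego-network) to $H^+$ (an induced subgraph of $G$ on the same vertices plus $v$) does not destroy any edges, which holds because every edge of $H$ is an edge of $G$ and $H^+$ is the induced subgraph on a superset of $V(H)$; and one must not forget to check the degree condition at the newly added vertex $v$, which is where the lower bound $|V(H)| \geq r+1$ is needed. Connectivity of $H^+$ is immediate since $v$ is joined to everything, but it is not even required by Definition~\ref{def.coreness} of vertex coreness. I would close by remarking that this $+1$ gain is exactly what will later let us bound $h(v)$ in terms of corenesses computed inside $G$.
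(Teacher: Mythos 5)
Your proof is correct: you take a witnessing subgraph $H \subseteq G_{N(v)}$ containing $u$ with minimum degree $r$, add $v$ (adjacent to all of $V(H) \subseteq N(v)$) to get an induced subgraph of $G$ with minimum degree at least $r+1$, checking both the $+1$ gain at each $w \in V(H)$ and the bound $d(v) \geq |V(H)| \geq r+1$ at the new vertex. The paper itself omits the proof and defers to~\cite{HuangCLQY15}, but your construction is exactly the standard argument and is the same device the paper uses in its proof of Lemma~\ref{lemma.localcore}, so no further comparison is needed.
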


\begin{proof}
% Given any vertex $u\in N(v)$, $\varphi_{G_{N(v)}}(u)=k$ implies that there exists a subgraph $H\subseteq G_{N(v)}\subseteq G$ such that $u\in H$ and $\varphi(H)\geq k$. We denotes $d_H(u)$ as the degree of any vertex $u$ in subgraph $H$. According to the definition of $k$-core, we have $d_H(u)>=k$ for any vertex $u\in H$. Then we add $v$ and its incidents edges to $H$ to generate a new subgraph $H'$ of $G$, where $V(H')=V(H)\cup \{v\}$ and $E(H')=E(H)\cup \{(v,u) : u\in V(H)\}$. It's easy to show that $\forall v^*\in V(H')$, $d_{H'}(v^*)\geq k+1$, which means that the coreness of $H'$ $\varphi_(H')\geq k+1$. Thus, for the given vertex $u\in V(H')$ in $G$, we have $\varphi_G(u)\geq k+1$ holds by definition~\ref{def.coreness}.
We omit the proof for brevity. The detailed proof can be referred to~\cite{HuangCLQY15}.
\end{proof}

\begin{example}
Consider vertex $x_1$ in Fig.~\ref{fig.intro}, $x_1$ has coreness $\varphi_G(x_1)=4$. However, in the \ego $G_{N(v)}$, $\varphi_{G_{N(v)}}(x_1)=3$. Here $\varphi_G(x_1)\geq \varphi_{G_{N(v)}}(x_1)+1$ holds.
\end{example}

For a vertex $v$ and some vertices $u\in N(v)$, the global coreness $\varphi_G(u)$ is sometimes much larger than the coreness of $u$ in the \ego of $v$, i.e. $\varphi_G(u) >> \varphi_{G_{N(v)}}(u)$. The following lemma gives another upper bound for estimating the coreness $\varphi_{G_{N(v)}}(u)$, w.r.t. vertices $v$ and $u\in N(v)$.

\begin{lemma}\label{lemma.localcore}
Given a vertex $v$ and its coreness $\varphi_G(v)$, $\forall u\in N(v)$, $\varphi_{G_{N(v)}}(u) < \varphi_G(v)$.
\end{lemma}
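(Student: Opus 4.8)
The plan is to prove the contrapositive by a short direct construction: assuming some $u\in N(v)$ has coreness $r=\varphi_{G_{N(v)}}(u)$ in the ego-network, I will exhibit a subgraph of $G$ that contains $v$ and has coreness at least $r+1$, which forces $\varphi_G(v)\ge r+1>r$. This mirrors the construction underlying Lemma~\ref{lemma.globalcore}, but tracks $v$ instead of $u$.

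First I would unfold the definition of coreness: since $\varphi_{G_{N(v)}}(u)=r$, there is a connected subgraph $H\subseteq G_{N(v)}$ with $u\in V(H)$ and $\varphi(H)=r$, i.e.\ every vertex of $H$ has at least $r$ neighbors inside $H$. A small but essential observation at this point is that $|V(H)|\ge r+1$, because $u$ alone already needs $r$ neighbors within $H$. A second essential observation is that every vertex of $H$ lies in $N(v)$, since $H$ is a subgraph of the ego-network $G_{N(v)}$, whose vertex set is exactly $N(v)$.

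Next I would form $H'$, the subgraph of $G$ induced by $V(H)\cup\{v\}$. Because each $w\in V(H)$ is adjacent to $v$ in $G$, we get $d_{H'}(w)\ge d_H(w)+1\ge r+1$, and $d_{H'}(v)=|V(H)|\ge r+1$. Hence $\varphi(H')\ge r+1$, and since $v\in V(H')$, the definition of vertex coreness gives $\varphi_G(v)\ge \varphi(H')\ge r+1$. Therefore $\varphi_{G_{N(v)}}(u)=r<\varphi_G(v)$, as claimed. (Alternatively, one can phrase this as: the $r$-core of $G_{N(v)}$ together with $v$ is an $(r+1)$-core of $G$ containing $v$.)

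I do not expect a genuine obstacle here; the proof is essentially a one-step enlargement argument. The only places meriting care are the two bookkeeping points flagged above — that $|V(H)|\ge r+1$ so that $v$ itself satisfies the degree requirement in $H'$, and that $H$ may be chosen inside the ego-network so that all of its vertices are true neighbors of $v$ — both of which are immediate from the definitions.
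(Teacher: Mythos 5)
Your proposal is correct and uses essentially the same construction as the paper's proof: augment a coreness-$r$ witness $H\subseteq G_{N(v)}$ with $v$ itself to obtain a subgraph of $G$ containing $v$ with minimum degree $r+1$, forcing $\varphi_G(v)\ge r+1$. The only difference is presentational (you argue directly/contrapositively while the paper argues by contradiction), and you make explicit the bookkeeping point $|V(H)|\ge r+1$ that the paper leaves as ``easy to verify.''
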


\begin{proof}
We prove this by contradiction. For any $u\in N(v)$, we assume $\varphi_G(v)=r$ and $\varphi_{G_{N(v)}}(u)\geq \varphi_G(v)$, which is $\varphi_{G_{N(v)}}(u)\geq r$. By the definition of coreness, there exists a subgraph $H\subseteq G_{N(v)}$ with coreness $\varphi(H)\geq r$ indicating that $\forall v^*\in V(H)$, $d_H(v^*)\geq r$. We add the vertex $v$ and its incident edges to $H$ to generate a new subgraph $H'\subseteq G$, where $V(H')=V(H)\cup \{v\}$ and $E(H')=E(H)\cup \{(v,u) : u\in V(H)\}$. It's easy to verify that for all $v^*$ in $H'$, we have $d_{H'}(v^*)\geq r+1$. Since $v$ is also contained in $H'$, by definition, $\varphi_G(v)\geq r+1$, which contradicts to the condition $\varphi_G(v)=r$.  
\end{proof}

% \begin{example}
% Consider $x_1$ in Figure~\ref{fig.intro}. In the entire graph $G$, $x_1$ has coreness $\varphi_G(x_1)=4$. However, in the \ego $G_{N(v)}$ of $v$, $\varphi_{G_{N(v)}}(x_1)=3$. Here $\varphi_G(x_1)\geq \varphi_{G_{N(v)}}(x_1)+1$ holds.
% \end{example}
% Based on Lemma~\ref{lemma.globalcore}, we derive an upper bound $\overline{h}(v)$ for the $h$-index based structural diversity score $h(v)$ of $v$ by the following lemma.

Combining Lemma~\ref{lemma.globalcore} and Lemma~\ref{lemma.localcore}, we have the following corollary.

\begin{corollary}\label{coro.bound}
Given a vertex $v$ in graph $G$, for any vertex $u \in N(v)$, $\widehat{\varphi}_{G_{N(v)}}(u)= \min\{\varphi_G(v), \varphi_G(u)-1\}$ and $\widehat{\varphi}_{G_{N(v)}}(u) \geq \varphi_{G_{N(v)}}(u)$ hold.
\end{corollary}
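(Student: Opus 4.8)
The plan is to derive Corollary~\ref{coro.bound} as a direct consequence of the two preceding lemmas, with essentially no new work required. First I would fix an arbitrary vertex $v$ in $G$ and an arbitrary neighbor $u \in N(v)$, and set $r = \varphi_{G_{N(v)}}(u)$, the coreness of $u$ inside the \ego. The two bounds on $r$ come from the two lemmas: Lemma~\ref{lemma.globalcore} gives $\varphi_G(u) \geq r + 1$, i.e. $r \leq \varphi_G(u) - 1$; Lemma~\ref{lemma.localcore} gives $r < \varphi_G(v)$, and since $r$ and $\varphi_G(v)$ are integers this means $r \leq \varphi_G(v)$ (in fact $r \leq \varphi_G(v) - 1$, but the weaker inequality suffices here).

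Combining these two inequalities, $r$ is bounded above by both $\varphi_G(v)$ and $\varphi_G(u) - 1$, hence by their minimum:
\[
\varphi_{G_{N(v)}}(u) = r \leq \min\{\varphi_G(v),\ \varphi_G(u) - 1\} = \widehat{\varphi}_{G_{N(v)}}(u).
\]
This is precisely the claimed inequality $\widehat{\varphi}_{G_{N(v)}}(u) \geq \varphi_{G_{N(v)}}(u)$, and the definition $\widehat{\varphi}_{G_{N(v)}}(u) = \min\{\varphi_G(v), \varphi_G(u) - 1\}$ is just stated, so there is nothing further to prove.

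There is no real obstacle here — the corollary is a bookkeeping step that packages the two lemmas into a single computable upper bound on the \ego-coreness using only the global corenesses $\varphi_G(v)$ and $\varphi_G(u)$, which are obtained once by core decomposition on $G$. The only point worth stating carefully is the integrality argument turning the strict inequality of Lemma~\ref{lemma.localcore} into the non-strict bound $r \leq \varphi_G(v)$; everything else is immediate from the definition of $\min$. I would therefore keep the proof to two or three sentences, citing Lemma~\ref{lemma.globalcore} and Lemma~\ref{lemma.localcore} explicitly.
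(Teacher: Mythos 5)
Your proposal is correct and matches the paper's argument: the paper derives this corollary precisely by combining Lemma~\ref{lemma.globalcore} (which yields $\varphi_{G_{N(v)}}(u) \leq \varphi_G(u)-1$) and Lemma~\ref{lemma.localcore} (which yields $\varphi_{G_{N(v)}}(u) < \varphi_G(v)$), so the coreness in the \ego is bounded by the minimum of the two quantities. Your note on handling the strict inequality via integrality is a fine touch but introduces no deviation from the paper's route.
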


Based on Corollary~\ref{coro.bound}, we derive an upper bound $\widehat{h}(v)$ for the $h$-index based structural diversity score $h(v)$ as follows.

% \begin{lemma}\label{lemma.bound}
% Given a vertex $v$ and its \ego $G_{N(v)}$, we have an upper bound of the structural diversity score of $v$ $h(v)$, denoted by $\overline{h}(v)$. $\overline{h}(v)$ is the maximum number of $x$ such that there exist at least $x*(x+1)$ number of vertices in $G_{N(v)}$ with global coreness $\varphi_G\geq x+1$. And $h(v)\leq \overline{h}(v)$ holds.
% \end{lemma}

\begin{lemma}\label{lemma.bound}
Given a vertex $v$ and its \ego $G_{N(v)}$, we have an upper bound of diversity score $h(v)$, denoted by $$\widehat{h}(v)=\max_{x \in \mathbb Z_+}\{x : |\{u\in N(v): \widehat{\varphi}_{G_{N(v)}}(u)\geq x\}|\geq x\cdot(x+1)\}.$$
\end{lemma}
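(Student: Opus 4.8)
# Proof Proposal

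The plan is to show that $\widehat{h}(v)$ as defined is at least $h(v)$, by establishing that whenever the defining inequality of $h(v)$ holds for some value $r$, the defining inequality of $\widehat{h}(v)$ also holds for $x = r$; since $\widehat{h}(v)$ is the maximum such $x$, this yields $\widehat{h}(v) \geq h(v)$. So suppose $h(v) = r$. By Definition~\ref{def-hindex}, there exist at least $r$ discriminative cores in $G_{N(v)}$ each with coreness at least $r$. I would first argue that these discriminative cores are \emph{vertex-disjoint}: two distinct discriminative cores are maximal connected subgraphs, and if they shared a vertex their union would be a connected subgraph, contradicting maximality of each (since each is a maximal connected subgraph with no denser subgraph inside — one needs to check the union does not create a higher-coreness subgraph, but a discriminative core of coreness $s$ contains, by definition, no subgraph of coreness $> s$, so overlapping ones would have to coincide).

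Next, I would count the vertices these $r$ discriminative cores contribute. A discriminative core $H$ with $\varphi(H) \geq r$ has every vertex of degree $\geq r$ inside $H$, hence $H$ contains at least $r+1$ vertices. Since the $r$ discriminative cores are vertex-disjoint, together they contain at least $r(r+1)$ distinct vertices of $N(v)$. The key observation is then that each such vertex $u$ satisfies $\varphi_{G_{N(v)}}(u) \geq r$: indeed $u$ lies in a subgraph $H \subseteq G_{N(v)}$ with $\varphi(H) \geq r$, so by Definition~\ref{def.coreness} its coreness in $G_{N(v)}$ is at least $r$. Applying Corollary~\ref{coro.bound}, which gives $\widehat{\varphi}_{G_{N(v)}}(u) \geq \varphi_{G_{N(v)}}(u) \geq r$, we conclude that $|\{u \in N(v) : \widehat{\varphi}_{G_{N(v)}}(u) \geq r\}| \geq r(r+1)$. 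Hence $x = r$ satisfies the condition in the definition of $\widehat{h}(v)$, so $\widehat{h}(v) \geq r = h(v)$, as required.

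The main obstacle I anticipate is the vertex-disjointness claim for discriminative cores and the precise justification that a discriminative core of coreness $\geq r$ has at least $r+1$ vertices — the latter is the standard fact that a graph with minimum degree $r$ has at least $r+1$ vertices, but one must be careful that the discriminative core's coreness $\varphi(H)$ is exactly its minimum degree (true by Definition~\ref{def.coreness}). The disjointness requires a short argument: if $H_1, H_2$ are distinct discriminative cores sharing a vertex, consider $H_1 \cup H_2$; it is connected, and since neither $H_1$ nor $H_2$ contains a subgraph of strictly higher coreness than itself, one argues the union cannot strictly dominate both, contradicting maximality. A cleaner route, if the disjointness is delicate, is to note that discriminative cores of coreness $\geq r$ are precisely the connected components of the $r$-core of $G_{N(v)}$ that happen to be discriminative, and connected components are automatically disjoint — but one must then confirm each discriminative core of coreness $\geq r$ sits inside a single component of the $r$-core. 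I would handle this by invoking that the $r$-core of $G_{N(v)}$ is the maximal subgraph of minimum degree $\geq r$, every discriminative core of coreness $\geq r$ is contained in it, and distinct components of the $r$-core are vertex-disjoint, which suffices for the counting bound of $r(r+1)$ distinct vertices.
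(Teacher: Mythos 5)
Your proposal is correct and follows essentially the same route as the paper's own proof: assume $h(v)=r$, note there are at least $r$ discriminative cores of coreness $\geq r$, each contributing at least $r+1$ vertices $u$ with $\varphi_{G_{N(v)}}(u)\geq r$, so at least $r(r+1)$ such vertices exist, and then apply Corollary~\ref{coro.bound}. Your only addition is an explicit justification of the vertex-disjointness of the discriminative cores, a point the paper's proof leaves implicit.
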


\begin{proof}
% The proof is intuitive. For a vertex $v$ in $G$, by definition, the $h$-index based structural diversity score of $v$, i.e. $h(v)=x$, implies that there exists at least $x$ number of \discomps with coreness at least $x$ in its \ego $G_{N(v)}$. And each $x$-core requires at least $(x+1)$ number of vertices with coreness greater than or equal to $x$ according to the definition of $t$-core. We can infer from the above conditions that there are at least $x\cdot (x+1)$ number of vertices with coreness at least $x$ in $G_{N(v)}$.
% % By Lemma~\ref{lemma.globalcore}, if a vertex $u$ has coreness $x$ inside any vertex $v$'s \ego, i.e. $\varphi_{G_{N(v)}}(u)=x$, $u$ will have coreness greater than $x+1$ in the original graph $G$, i.e. $\varphi_G(u)\geq x+1$. 
% With Corollary~\ref{coro.bound}, we can estimate the number of vertices with coreness at least $x$ in $G_{N(v)}$ by $|\{u\in N(v): \widehat{\varphi}_{G_{N(v)}}(u)\geq x\}|$. Thus, the maximum value of $x$ will be an upper bound of the $h$-index based structural diversity score of $v$.
Assume that $h(v) = x^*$, we prove $\widehat{h}(v)\geq x^*$.  By $h(v) =x^*$, it indicates that there exists $x^*$ \discomps $g$ with $\varphi(g)\geq x^*$ in the \ego $G_{N(v)}$. For $\varphi(g)\geq x^*$, \discomp $g$ has at least $x^*+1$ nodes $u$ with $\varphi_{G_{N(v)}}(u)\geq x^*$. Thus, the whole \ego $G_{N(v)}$ has at least $x^*\cdot(x^*+1)$ nodes $u$ with $\varphi_{G_{N(v)}}(u)\geq x^*$, i.e., $h(v) =x^* \leq \max_{x \in \mathbb Z_+}\{x : |\{u\in N(v): \varphi_{G_{N(v)}}(u)\geq x\}|\geq x\cdot(x+1)\}$. By Corollary~\ref{coro.bound}, $\widehat{\varphi}_{G_{N(v)}}(u)\geq \varphi_{G_{N(v)}}(u)$, hence we have $\widehat{h}(v)\geq x^* = h(v)$.
\end{proof}

According to Lemma~\ref{lemma.bound}, once applying the core decomposition algorithm on graph $G$, we can directly compute the upper bounds $\widehat{h}(v)$ for all vertices $v$.

% However, for a given vertex $v$ and some vertices $u\in N(v)$, sometimes the global coreness of $u$ $\varphi_G(u)$ will be much larger than the coreness of $u$ inside \ego of $v$, i.e. $\varphi_G(u)\geq \varphi_{G_{N(v)}}(u)$, which may cause a loose estimation of the upper bound $\overline{h}(v)$. For each $v$, the following lemma gives an upper bound of the coreness of any $u\in N(v)$ in $G_{N(v)}$.

% \begin{lemma}\label{lemma.localcore}
% Given a vertex $v$ and its coreness $\varphi_G(v)=k$, $\forall u\in N(v)$, $\varphi_{G_{N(v)}}(u) < \varphi_G(v)$.
% \end{lemma}

% \begin{proof}
% We prove this by contradiction. For any $u\in N(v)$, we assume $\varphi_{G_{N(v)}}(u)\geq \varphi_G(v)$, which is $\varphi_{G_{N(v)}}(u)\geq k$. By the definition of coreness, there exists a subgraph $H\subseteq G_{N(v)}$ with coreness $\varphi(H)\geq k$ indicating that $\forall v^*\in V(H)$, $d_H(v)\geq k$. Similar to the proof of Lemma~\ref{lemma.globalcore}, we add the vertex $v$ and its incident edges to $H$ to generate a new subgraph $H'$ in $G$, where $V(H')=V(H)\cup \{v\}$ and $E(H')=E(H)\cup \{(v,u) : u\in V(H)\}$. It's easy to verify that for all $v^*$ in $H'$, we have $d_{H'}(v^*)\geq k+1$. Since $v$ is also contained in $H'$,by definition, $\varphi_G(v)\geq k+1$, which contradicts to the condition $\varphi_G(v)=k$.  
% \end{proof}

% By Lemma~\ref{lemma.localcore}, when computing the upper bound $\overline{h}(v)$ of each $v$, for any vertex $u\in N(v)$ with $\varphi_G(u)\geq \varphi_G(v)$, we treat its coreness inside $G_{N(v)}$ as $\varphi_{G_{N(v)}}(u)= \varphi_G(v)-1$.

\subsection{Top-K Structural Diversity Search Framework}
% Based on the upper bound given in the last section, we can derive a framework for pruning some unqualified vertices without computing its actual structural diversity score so that the search space can be reduced efficiently. For a vertex with an upper bound $\overline{h}(v)$ smaller than the minimum value in the existing top-k list, we can safely prune the vertex and terminate the search process. The efficient top-k structural diversity search framework is presented in Algorithm~\ref{algo:bound-search}.
Equipped with the upper bound $\widehat{h}(v)$, we develop an efficient top-$k$ search framework for safely pruning the search space and avoiding the unnecessary computation of $h(v)$. The efficient top-k structural diversity search framework is presented in Algorithm~\ref{algo:bound-search}.

Algorithm~\ref{algo:bound-search} starts with the initialization of the upper bound of each vertex $v$ (lines 1-2). Then, it sorts all vertices in descending order according to their upper bounds (line 3). It maintains a list $\mathcal{S}$ to store the top-$k$ result (line 4). In each iteration, the algorithm pops out a vertex $v^*$ from the vertex list $\mathcal{L}$ with the largest upper bound $\widehat{h}(v^*)$ (line 6). Next, it checks the early stop condition: if the answer set $\mathcal{S}$ has $k$ results and the minimum score in $\mathcal{S}$ is no less than the current upper bound, i.e. $\widehat{h}(v^*)\leq min_{v\in \mathcal{S}}h(v)$, the current vertex $v^*$ is safely pruned and the searching process is terminated (lines 8-9). Otherwise, the procedure of structural diversity score computation is invoked and check if $v^*$ can be added into the result set (lines 10-14). Finally, the top-$k$ results stored in $\mathcal{S}$ are returned.

\begin{algorithm}[t]
\small
\caption{Efficient Top-$k$ Search Framework} \label{algo:bound-search}
\begin{flushleft} 
\textbf{Input:} $G=(V, E)$, an integer $k$\\
\textbf{Output:} top-$k$ structural diversity results\\
\end{flushleft}
% \vspace*{-0.3cm}
\begin{algorithmic}[1]

\STATE Apply the core decomposition on $G$ by Algorithm~\ref{algo:core-decomp} and obtain  $\varphi_G(v)$ for all vertices $v \in V$;

\STATE  \textbf{for} $v\in V$ \textbf{do}

\STATE  \hspace{0,3cm} Compute $\widehat{h}(v)$ according to Lemma~\ref{lemma.bound}; 

\STATE $\mathcal{L} \leftarrow$ Sort all vertices $V$ in descending order of $\widehat{h}(v)$;

\STATE  $\mathcal{S} \leftarrow \emptyset$;

\STATE  \textbf{while} $\mathcal{L}\neq \emptyset$ \textbf{do}

\STATE  \hspace{0.3cm} $v^* \leftarrow \arg\max_{v\in \mathcal{L}} \widehat{h}(v)$; Delete $v^*$ from $\mathcal{L}$;

\STATE  \hspace{0.3cm}  \textbf{if} $|\mathcal{S}|=r$ and $\widehat{h}(v^*) \leq \min_{v\in \mathcal{S}}h(v)$ \textbf{then}

\STATE  \hspace{0.3cm}  \hspace{0.3cm} \textbf{break};

% \STATE  \hspace{0.3cm}  $TrussDecomposition(G'_{N(v^*)})$;

\STATE  \hspace{0.3cm} Invoke Algorithm~\ref{algo:baseline-score}  to compute $h(v^*)$;

\STATE  \hspace{0.3cm}  \textbf{if} $|\mathcal{S}|<r$ \textbf{then} $\mathcal{S} \leftarrow \mathcal{S} \cup \{v^*\}$;

\STATE  \hspace{0.3cm}  \textbf{else if} $h(v^*) > \min_{v\in \mathcal{S}}h(v)$ \textbf{then}

\STATE  \hspace{0.3cm}  \hspace{0.3cm}  $u\leftarrow \arg\min_{v\in \mathcal{S}}h(v)$;

\STATE  \hspace{0.3cm}  \hspace{0.3cm}  $\mathcal{S} \leftarrow (\mathcal{S}-  \{u\} ) \cup \{v^*\}$;

\STATE  \textbf{return} $\mathcal{S}$; 
\end{algorithmic}
\end{algorithm}

% \textbf{Complexity Analysis.} We discuss the time and space complexity of Algorithm~\ref{algo:bound-search}. First,the core decomposition performed on the original graph $G$ takes $O(m)$ time. Second, 

\subsection{Complexity Analysis}

In this section, we analyze the time and space complexity of Algorithm~\ref{algo:bound-search}.

\begin{lemma}\label{lemma.basecomplex}
Algorithm~\ref{algo:baseline-score} computes $h(v)$ for each vertex $v$ in $O(\sum_{u\in N(v)}min\{d(u),d(v)\})$ time and $O(m)$ space.
\end{lemma}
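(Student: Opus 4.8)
# Proof Proposal

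The plan is to account for the total running time by summing the cost of the three steps in Algorithm~\ref{algo:baseline-score}: (i) extracting the \ego $G_{N(v)}$; (ii) running discriminative core decomposition (Algorithm~\ref{algo:disc-decomp}) on $G_{N(v)}$; and (iii) scanning the diversity vector $\mathcal{C}(v)$ to produce $h(v)$. The dominant term should be the \ego construction together with the core decomposition on it, both of which are governed by the number of edges in $G_{N(v)}$, so the first task is to bound $|E(G_{N(v)})|$.

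\textbf{Step 1: Bound the size of the \ego.} I would argue that the \ego $G_{N(v)}$ has $|N(v)| = d(v)$ vertices and that every edge $(u,w)\in E(G_{N(v)})$ is incident to two vertices $u,w\in N(v)$. Charging each such edge to the endpoint of smaller degree in $G$, each vertex $u\in N(v)$ is charged at most $\min\{d(u), d(v)\}$ times: on one hand $u$ has at most $d(u)$ incident edges in all of $G$, hence at most $d(u)$ inside $G_{N(v)}$; on the other hand $u$ can be adjacent within $G_{N(v)}$ only to vertices of $N(v)$, of which there are $d(v)$, so $u$ has at most $d(v)$ incident edges in $G_{N(v)}$. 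Therefore $|E(G_{N(v)})| = O\!\left(\sum_{u\in N(v)}\min\{d(u),d(v)\}\right)$. The same bound dominates the cost of materializing $G_{N(v)}$ (e.g., by intersecting adjacency lists, processing each $u\in N(v)$ in $\min\{d(u),d(v)\}$ time with appropriate hashing or sorting-by-degree tricks, as is standard in triangle-listing analyses).

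\textbf{Step 2: Cost of discriminative core decomposition.} Core decomposition (Algorithm~\ref{algo:core-decomp}) runs in time linear in the number of vertices and edges of its input graph, i.e. $O(|N(v)| + |E(G_{N(v)})|)$ on $G_{N(v)}$. The BFS-based sweep in Algorithm~\ref{algo:disc-decomp} visits each vertex of $G_{N(v)}$ once (across all values of $t$, since a vertex of coreness $t$ is touched only in iteration $t$) and examines each edge a constant number of times, again $O(|N(v)| + |E(G_{N(v)})|)$. Since $G$ is connected we have $d(v)\ge 1$, and one checks $|N(v)| = d(v) = O\!\left(\sum_{u\in N(v)}\min\{d(u),d(v)\}\right)$ as well (each $\min$ term is at least $1$), so all of these terms are absorbed into the claimed bound. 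Step 3 (the loop in lines 4--6 of Algorithm~\ref{algo:baseline-score}) runs in time $O(t_{max}) = O(d(v))$, which is likewise absorbed. Summing Steps 1--3 gives the stated time bound $O\!\left(\sum_{u\in N(v)}\min\{d(u),d(v)\}\right)$.

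\textbf{Step 3: Space.} The space bound is $O(m)$ because storing $G$ itself takes $O(n+m) = O(m)$ (using connectivity), and every auxiliary structure---the \ego $G_{N(v)}$, the coreness arrays, the BFS queue $Q$, the $visited$ set, and the diversity vector $\mathcal{C}(v)$---is bounded by $O(|N(v)| + |E(G_{N(v)})|) = O(m)$. The main obstacle, and the only part needing care, is the edge-counting argument in Step 1: one must justify the $\min\{d(u),d(v)\}$ charging cleanly and confirm that the \ego can actually be \emph{built} within that budget rather than in the naive $O(d(v)^2)$ time, which relies on the standard degree-ordering / adjacency-intersection technique; the rest is routine linear-time bookkeeping.
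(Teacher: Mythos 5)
Your proposal is correct and follows essentially the same route as the paper's proof: bound $|E(G_{N(v)})|$ and the cost of building the \ego by $O(\sum_{u\in N(v)}\min\{d(u),d(v)\})$ via the standard triangle-listing/adjacency-intersection argument, observe that core decomposition and the BFS sweep are linear in the \ego and the final $h$-index scan is $O(d(v))$, and charge the $O(m)$ space to storing $G$ and its subgraph $G_{N(v)}$ plus $O(n)$ auxiliary arrays under the connectivity assumption. Your explicit edge-charging and the remark that the \ego must be built (not just counted) within that budget are slightly more careful than the paper's one-line justification, but the argument is the same.
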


\begin{proof}
Extracting $G_{N(v)}$ of $v$ takes $O(\sum_{u\in N(v)}min\{d(u),d(v)\})$, since all triangles $\triangle_{vuw}$ should be listed to enumerate each edge $(u,w) \in E(G_{N(v)})$. According to~\cite{BatageljZ03}, the core decomposition performed in $G_{N(v)}$ takes $O(|E(G_{N(v)})|+d(v))$ time. The sorting of the vertices can be finished in $O(d(v))$ time using bin sort. And the breadth first search process for identifying the \discomps needs $O(|E(G_{N(v)})|)$ time. In addition, the computing of the $h$-index based structural diversity score $h(v)$ runs in $O(\delta(G_{N(v)}))$ time, where $\delta(G_{N(v)})=\max_{u\in N(v)}\varphi_{G_{N(v)}}(u)$ is the degeneracy of $G_{N(v)}$. And $\delta(G_{N(v)})$ is bounded by the degree of $v$, which is $O(\delta(G_{N(v)}))\subseteq d(v)$. Overall, the time complexity of Algorithm~\ref{algo:baseline-score} is $O(\sum_{u\in N(v)}$ $min\{d(u),d(v)\})$.

We continue to analyze the space complexity of Algorithm~\ref{algo:baseline-score}. The storage of the \ego of $v$ takes $O(n+m)$ space since $G_{N(v)}\subseteq G$. And both the sorted list of vertices (line 4) and the structural diversity vector of $v$ takes $O(n)$ space. Thus, the space complexity of Algorithm~\ref{algo:baseline-score} is $O(n+m)\subseteq O(m)$ due to our graph connectivity assumption.
\end{proof}

\begin{theorem}
Algorithm~\ref{algo:bound-search} computes the top-$k$ results in $O(\rho m)$ time and $O(m)$ space, where $\rho$ is the arboricity of $G$ and $\rho\leq \min\{d_{max}, \sqrt{m}\}$~\cite{ChibaN85}.
\end{theorem}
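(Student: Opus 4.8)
The plan is to bound the total cost of Algorithm~\ref{algo:bound-search} by summing the preprocessing cost, the sorting cost, and the cost of all invocations of Algorithm~\ref{algo:baseline-score}, then show each piece is $O(\rho m)$. By Lemma~\ref{lemma.basecomplex}, a single call to compute $h(v^*)$ costs $O(\sum_{u\in N(v^*)}\min\{d(u),d(v^*)\})$. In the worst case the early-stop condition never fires, so Algorithm~\ref{algo:bound-search} may invoke the score computation for every vertex; hence the dominant term is $\sum_{v\in V}\sum_{u\in N(v)}\min\{d(u),d(v)\}$. First I would invoke the classical result of Chiba and Nishizeki~\cite{ChibaN85} that $\sum_{(u,v)\in E}\min\{d(u),d(v)\}\le 2\rho m$, where $\rho$ is the arboricity of $G$. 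Since $\sum_{v\in V}\sum_{u\in N(v)}\min\{d(u),d(v)\} = 2\sum_{(u,v)\in E}\min\{d(u),d(v)\}$ (each edge is counted from both endpoints), this gives a bound of $O(\rho m)$ for the total work across all score computations.

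Next I would account for the remaining steps. The core decomposition on $G$ in line~1 runs in $O(m+n) = O(m)$ time by~\cite{BatageljZ03}. Computing $\widehat{h}(v)$ for all $v$ in lines~2--3 can be done in $O(m)$ time in total: after core decomposition is available, for each $v$ the bound $\widehat{h}(v)$ depends only on the multiset $\{\widehat{\varphi}_{G_{N(v)}}(u): u\in N(v)\}$ where $\widehat{\varphi}_{G_{N(v)}}(u)=\min\{\varphi_G(v),\varphi_G(u)-1\}$, and scanning $v$'s adjacency list with a counting-array pass finds the largest $x$ with $|\{u: \widehat\varphi_{G_{N(v)}}(u)\ge x\}|\ge x(x+1)$ in $O(d(v))$ time, summing to $O(m)$. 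Sorting all vertices by $\widehat{h}(v)$ in line~3 is $O(n)$ via bin sort since the bounds are bounded by $d_{max}\le n$; likewise maintaining $\mathcal{S}$ and selecting the maximum per iteration contributes only lower-order terms (or can be handled with the sorted list). Since $\rho\ge 1$, every one of these $O(m)$ terms is absorbed into $O(\rho m)$.

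For the space complexity, each call to Algorithm~\ref{algo:baseline-score} uses $O(m)$ space by Lemma~\ref{lemma.basecomplex}, and these are reused across iterations rather than accumulated; the arrays $\varphi_G(\cdot)$, $\widehat{h}(\cdot)$, the sorted list $\mathcal{L}$, and the result list $\mathcal{S}$ each take $O(n)\subseteq O(m)$. Hence the overall space is $O(m)$. Finally I would note the stated inequality $\rho\le\min\{d_{max},\sqrt m\}$ is a standard fact about arboricity (from~\cite{ChibaN85}) and needs no separate argument here.

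The main obstacle is the amortization argument: naively one would bound a single score computation by $O(d_{max}^2)$ and multiply by $n$, giving $O(n\,d_{max}^2)$, which is far weaker. The key insight that must be made explicit is that summing $\sum_{u\in N(v)}\min\{d(u),d(v)\}$ over all $v$ telescopes into twice the edge-indexed sum $\sum_{(u,v)\in E}\min\{d(u),d(v)\}$, which is exactly the quantity Chiba--Nishizeki bound by $2\rho m$; this is the same accounting used to show all triangles in a graph can be listed in $O(\rho m)$ time. Everything else is routine bookkeeping that the preprocessing and sorting are no more expensive than this dominant term.
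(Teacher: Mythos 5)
Your proposal is correct and follows essentially the same route as the paper's own proof: bound the worst case where every vertex's score is computed, convert the vertex-indexed sum $\sum_{v\in V}\sum_{u\in N(v)}\min\{d(u),d(v)\}$ into the edge-indexed sum bounded by $O(\rho m)$ via Chiba--Nishizeki, and absorb the $O(m)$ preprocessing (core decomposition, upper-bound computation, bin sort) and $O(m)$ reusable working space. Your added details (the factor-of-two edge accounting and the $O(d(v))$ counting-array computation of $\widehat{h}(v)$) only make explicit what the paper leaves implicit.
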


\begin{proof}
% The baseline algorithm needs to iteratively compute $h(v)$ for each $v\in G$ using Algorithm~\ref{algo:baseline-score}. It takes $O(\sum_{v\in V}\{\sum_{u\in N(v)}min\{d(u),d(v)\}\})$ time in total by Lemma~\ref{lemma.basecomplex}. According to~\cite{ChibaN85}, we have $$O(\sum_{v\in V}\{\sum_{u\in N(v)}min\{d(u),d(v)\}\})\subseteq O(\sum_{(u,v)\in E}min\{d(u),d(v)\}) \subseteq O(\rho m)$$. Here $\rho$ is the arboricity of graph $G$, which is defined as the minimum number of disjoint spanning forests that cover all the edges in $G$. In addition, the top-$k$ results can be maintained in a list in $O(n)$ time and $O(n)$ space using bin sort. Overall, the baseline algorithm runs in $O(\rho m)$ time and $O(m)$ space.

Firstly, the core decomposition algorithm performed on $G$ takes $O(m)$ time and $O(n+m)$ space. Secondly, the computation of upper bound $\widehat{h}(v)$ for all $v$'s takes $O(m)$ time and $O(n) space$. In the worst case, Algorithm~\ref{algo:bound-search} needs to compute $h(v)$ for every vertex $v$. This takes $O(\sum_{v\in V}\{\sum_{u\in N(v)}min\{d(u),d(v)\}\})$ time in total by Lemma~\ref{lemma.basecomplex}. According to~\cite{ChibaN85}, we have $$O(\sum_{v\in V}\{\sum_{u\in N(v)}min\{d(u),d(v)\}\})\subseteq O(\sum_{(u,v)\in E}min\{d(u),d(v)\}) \subseteq O(\rho m).$$ Here $\rho$ is the arboricity of graph $G$, which is defined as the minimum number of disjoint spanning forests that cover all the edges in $G$. In addition, the top-$k$ results can be maintained in a list in $O(n)$ time and $O(n)$ space using bin sort. Overall, Algorithm~\ref{algo:bound-search} runs in $O(\rho m)$ time and $O(m)$ space.
\end{proof}

\section{Experiments}\label{sec.exp}
We conduct extensive experiments on real-world datasets to evaluate the effectiveness and efficiency of our proposed $h$-index based structural diversity model and algorithms. 
% All experiments are implemented in C++ and run on a Linux server with 6-core 2.9GHz CPU and 128GB memory.

\stitle{Datasets:} We run our experiments on four real-world datasets downloaded on the SNAP website~\cite{snapnets}. All datasets are treated as undirected graphs. The statistics of the networks are listed in Table~\ref{tab:dataset}. We report the node size $|V|$, edge size $|E|$ and the maximum degree $d_{max}$ of each network.

\begin{table}[t!]
\begin{center}
\scriptsize
%\small
\caption[]{\textbf{Network Statistics}}\label{tab:dataset}
% \vspace*{-0.4cm}
\begin{tabular}{|c|c|c|c|c|c|c|c|}
\hline
Name & $|V|$ & $|E|$ & $d_{max}$ \\
\hline \hline
% Wiki-Vote	& 7,115	 & 103,689 & 1,065 \\ \hline
% Email-Enron	& 36,692 & 183,831 & 1,383 \\ \hline
% AstroPh	& 18,772 & 198,050 & 504 \\ \hline
% Epinions & 75,879 & 508,837 & 3,044  \\ \hline
% Amazon	& 548,551 & 925,872 & 549 & 7 & 6  & 667,129 \\ \hline
Gowalla	& 196,591 & 950,327 & 14,730  \\  \hline
Youtube & 1,134,890 & 2,987,624 &  28,754  \\ \hline
% NotreDame	& 325,729 & 1,497,134 & 10,721 \\ \hline
LiveJournal	& 3,997,962 & 34,681,189 & 14,815 \\ \hline
% socfb-konect	& 59,216,214 & 92,522,012 & 4,960 \\ \hline
Orkut	& 3,072,441 & 117,185,083  & 33,313 \\ \hline

\end{tabular}
\end{center}
\vspace{-0.4cm}
\end{table}

\stitle{Compared Methods: }We evaluate all compared methods in terms of efficiency, effectiveness and also sensitivity to parameter setting. Specifically, we show three compared algorithms as follows.

% To evaluate the efficiency of our proposed algorithms, we compare our efficient top-$k$ search algorithms with the baseline algorithm to verify the proposed pruning techniques. Specifically, we show two compared algorithms as follows.
% To be more specific, we list the comparing algorithms as follows.

\squishlisttight
\item \baseline: is the baseline method proposed in Section~\ref{sec.baseline}.
% \item \bound: is an improved top-$k$ search algorithm in Algorithm XXX.
\item \hcore: is an improved top-$k$ search algorithm for computing the top-$k$ vertices with highest $h$-index based structural diversity in Algorithm~\ref{algo:bound-search}.

\item \tcore: is to compute the top-$k$ vertices with highest $t$-core based structural diversity~\cite{HuangCLQY15}. Here, $t$ is a parameter of coreness threshold.
\end{list} 

% To evaluate the effectiveness and sensitivity, we compare our \hcore and the method listed as follows. 

% \squishlisttight
% \item \tcore: is to compute the top-$k$ vertices with highest $t$-core based structural diversity \cite{HuangCLQY15}. Here, $t$ is a parameter of coreness threshold.
% \end{list}

Note that in the sensitivity evaluation, we test the state-of-the-art competitor \tcore and compare the top-$k$ results for different parameter $t$. Our $h$-index based structural diversity model has no input parameter, which is consistent on the top-$k$ results.

 %we compare our proposed $h$-index based structural diversity model, we compare the top-$k$ results obtained by our methods with those obtained by the core-based structural diversity model proposed in \cite{HuangCLQY15}. To compare the results obtained by the two methods, a simulation of information diffusion will be performed in the network and the number of activated vertices among their top-$k$ results will be reported. The results obtained by the component-based structural diversity model proposed in \cite{HuangCLQY15} will not included as a competitor since we adopt the core-based model as the basic structural diversity model in our setting. The detail of the two comparing structural diversity model is listed as follows.

%In the efficiency evaluation, we vary the parameter $k$ in $\{1,3,5,7,9,11,13,15\}$. 
%In the effectiveness evaluation experiments, we vary parameter $k$ in $\{50,60,70,80,90,100\}$ and vary parameter $t$ in $\{3,4,5\}$. 

\begin{figure*}[t]
\centering \mbox{
% \subfigure[DBLP]{\includegraphics[width=0.18\linewidth]{Figure/exp/dblp_graph/Compare_QT.pdf}} \quad
% \subfigure[AstroPh]{\includegraphics[width=0.18\linewidth]{Figure/exp/AstroPh/Compare_QT.pdf}} 
\subfigure[Gowalla]{\includegraphics[width=0.25\linewidth]{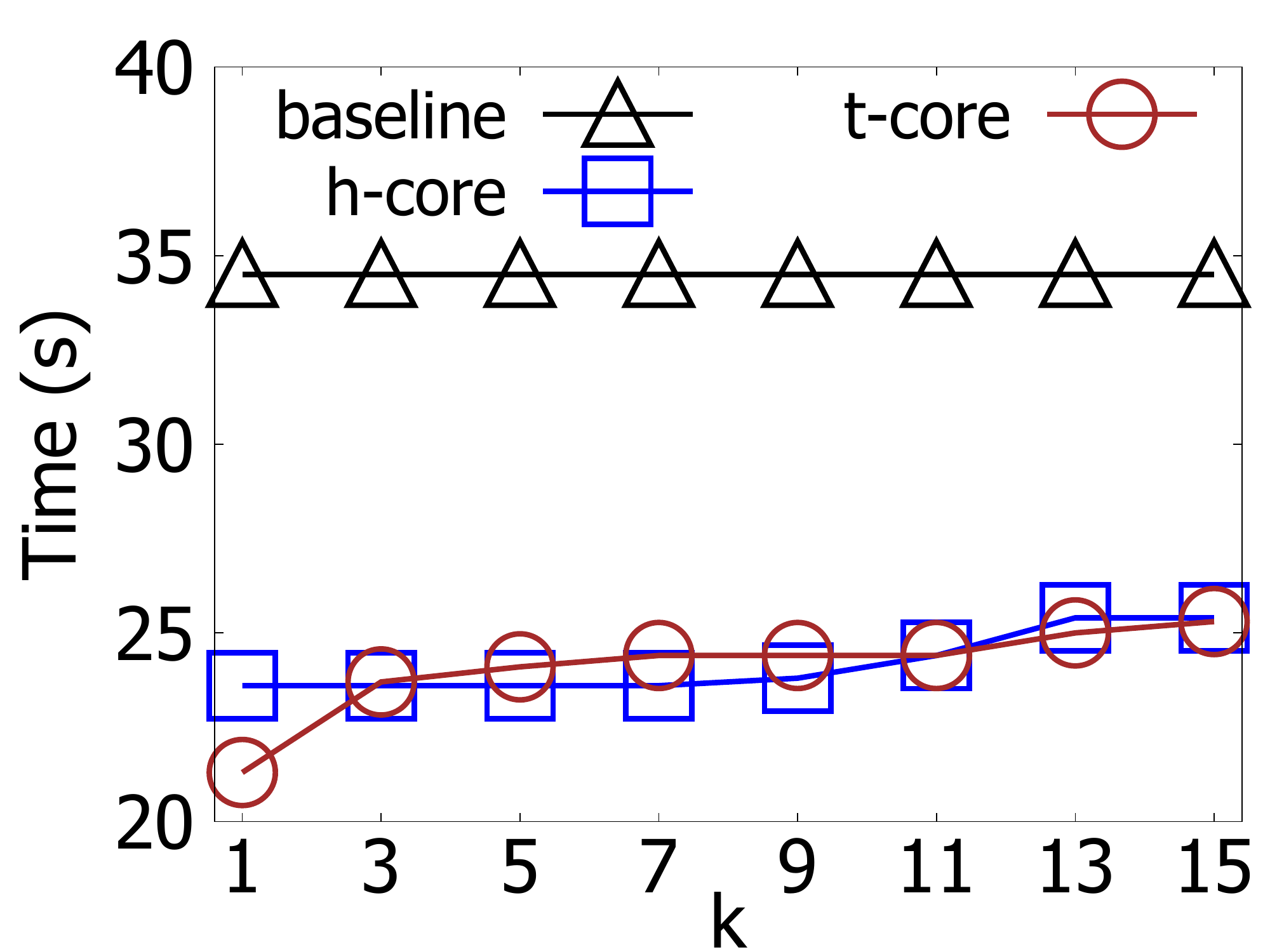}} 
\subfigure[Youtube]{\includegraphics[width=0.25\linewidth]{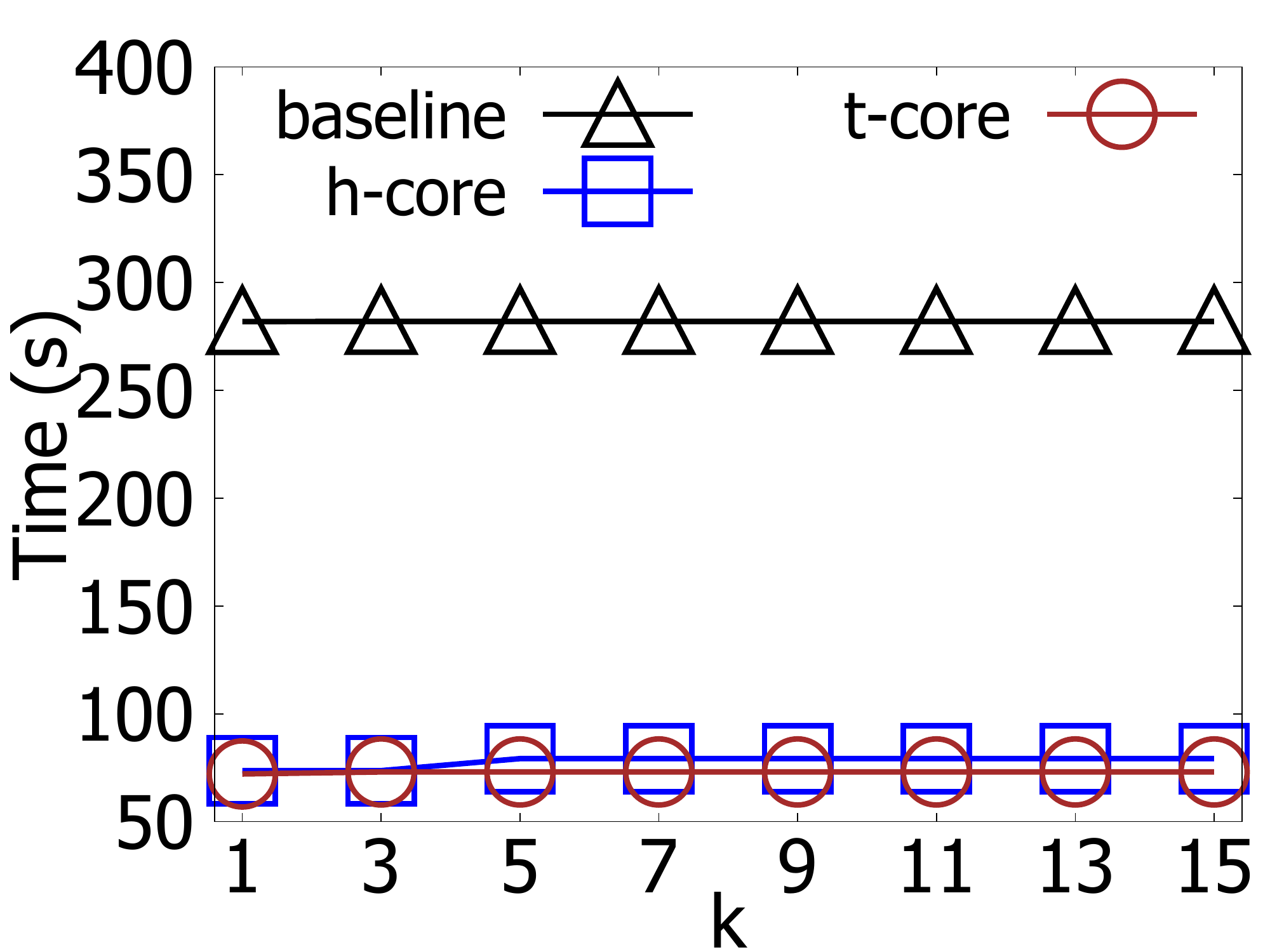}}
\subfigure[LiveJournal]{\includegraphics[width=0.25\linewidth]{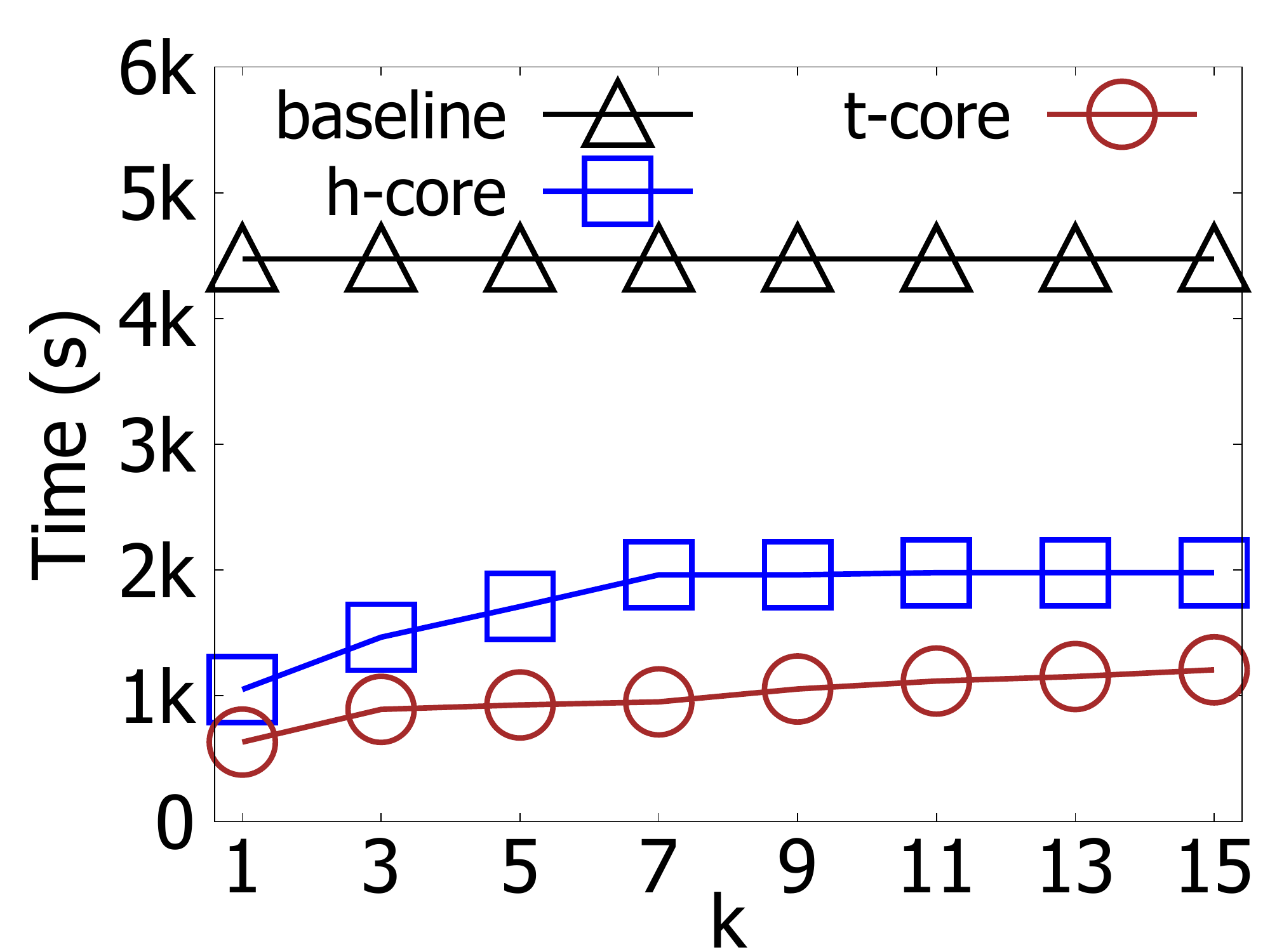}}
\subfigure[Orkut]{\includegraphics[width=0.25\linewidth]{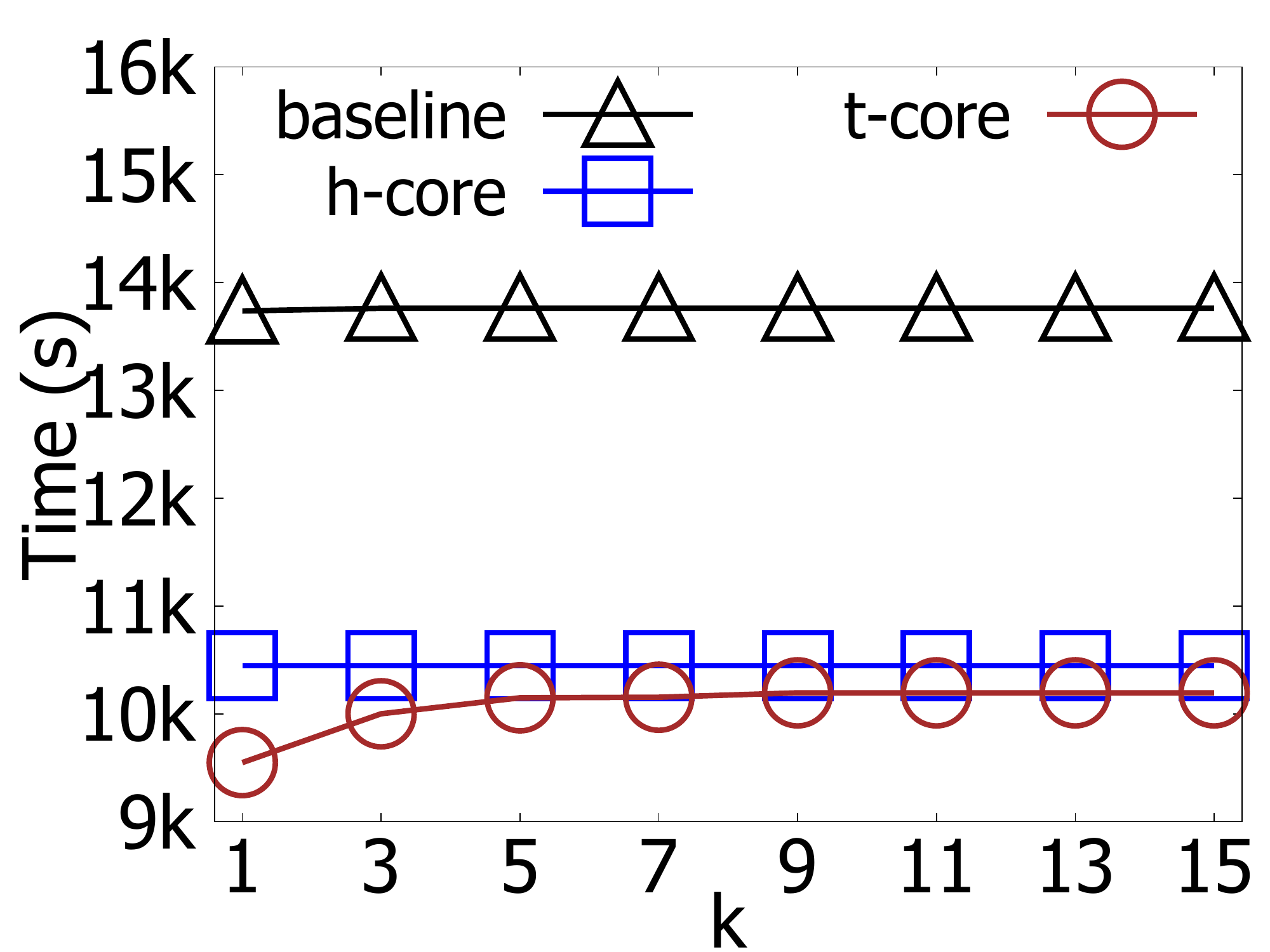}}
% \subfigure[Gowalla]{\includegraphics[width=0.22\linewidth]{Figure/exp/Gowalla/Compare_QT.pdf}} \quad
 }
 \vspace{-0.6cm}
\caption{Comparsion of \baseline, \bound and \tcore in terms of running time (in seconds).}
% \vspace*{-0.6cm}
\label{fig.qt_com}
\vspace*{-0.6cm}
\end{figure*}

\begin{figure*}[t]
\centering \mbox{
% \subfigure[DBLP]{\includegraphics[width=0.18\linewidth]{Figure/exp/dblp_graph/Compare_SP.pdf}} \quad
% \subfigure[AstroPh]{\includegraphics[width=0.20\linewidth]{Figure/exp/AstroPh/Compare_SP.pdf}} 
\subfigure[Gowalla]{\includegraphics[width=0.25\linewidth]{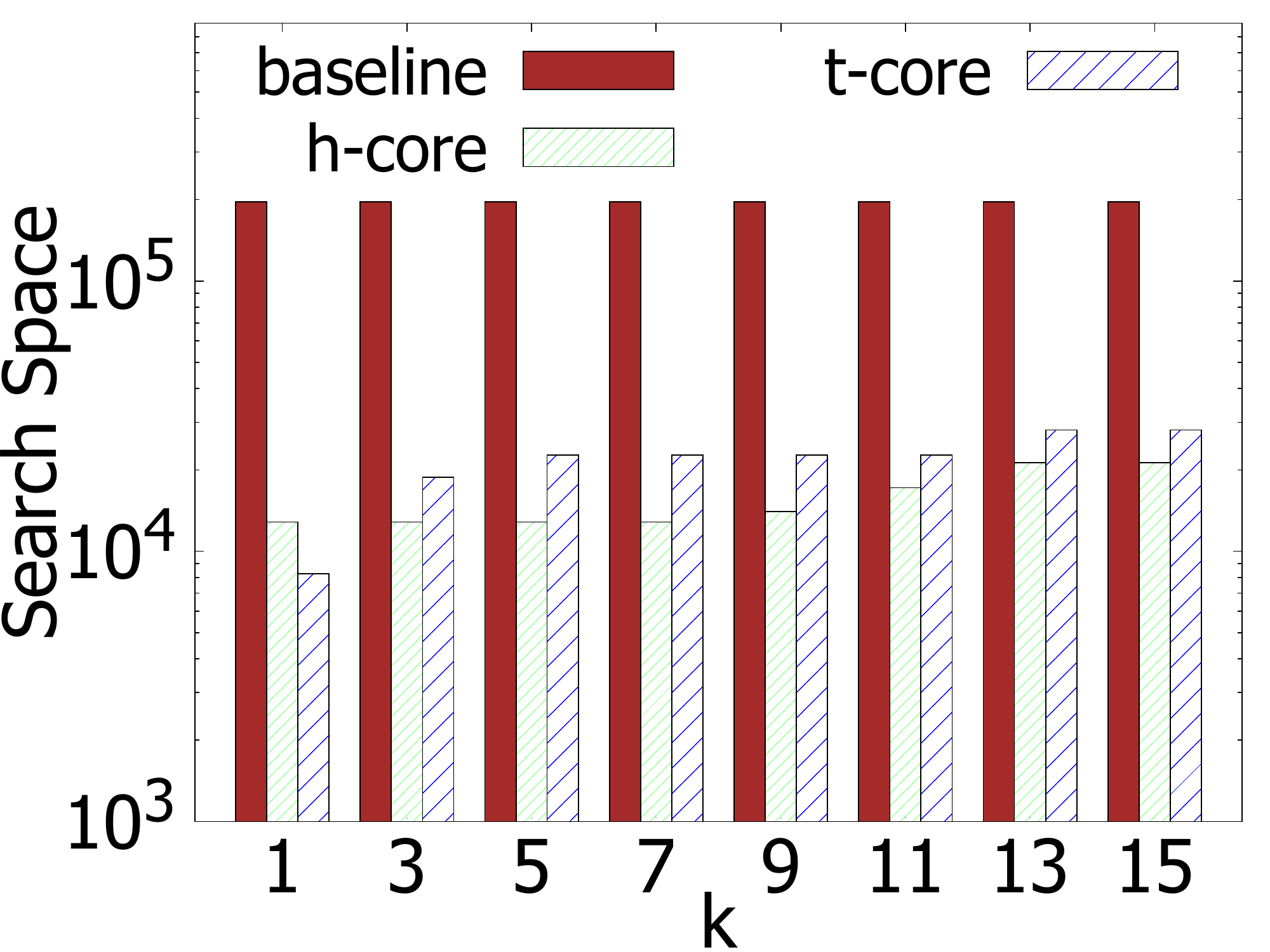}}
\subfigure[Youtube]{\includegraphics[width=0.25\linewidth]{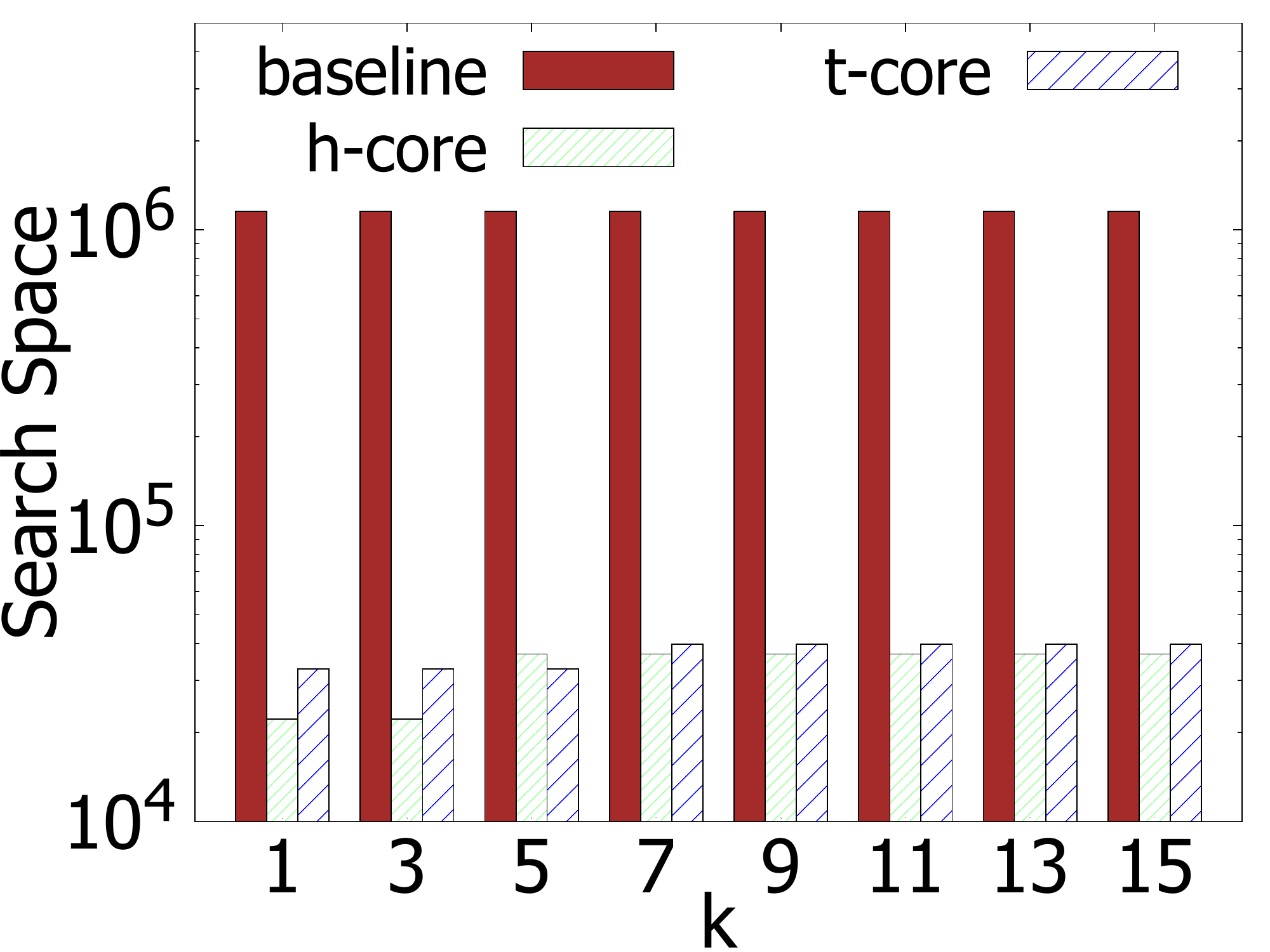}}
\subfigure[LiveJournal]{\includegraphics[width=0.25\linewidth]{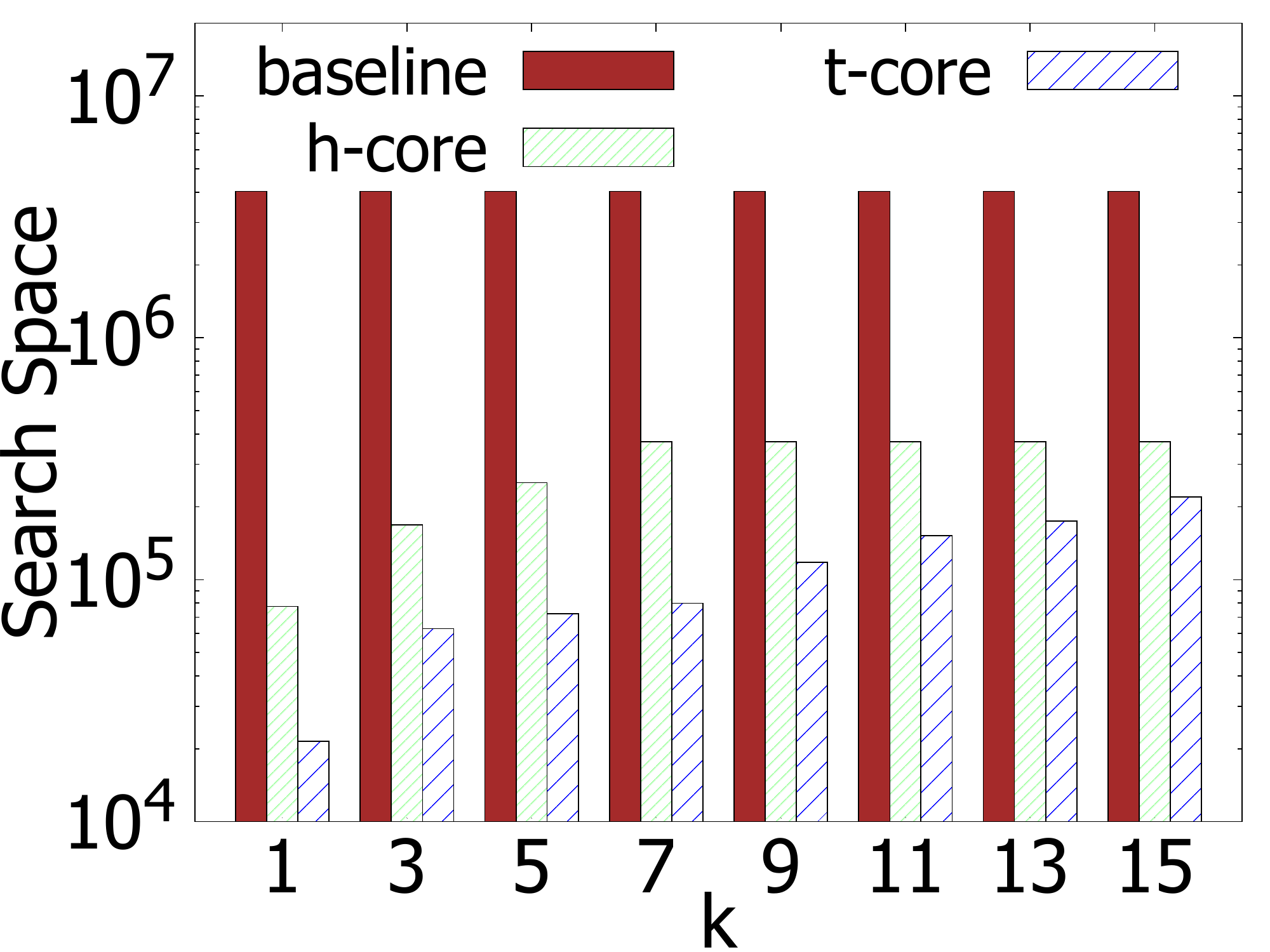}}
\subfigure[Orkut]{\includegraphics[width=0.25\linewidth]{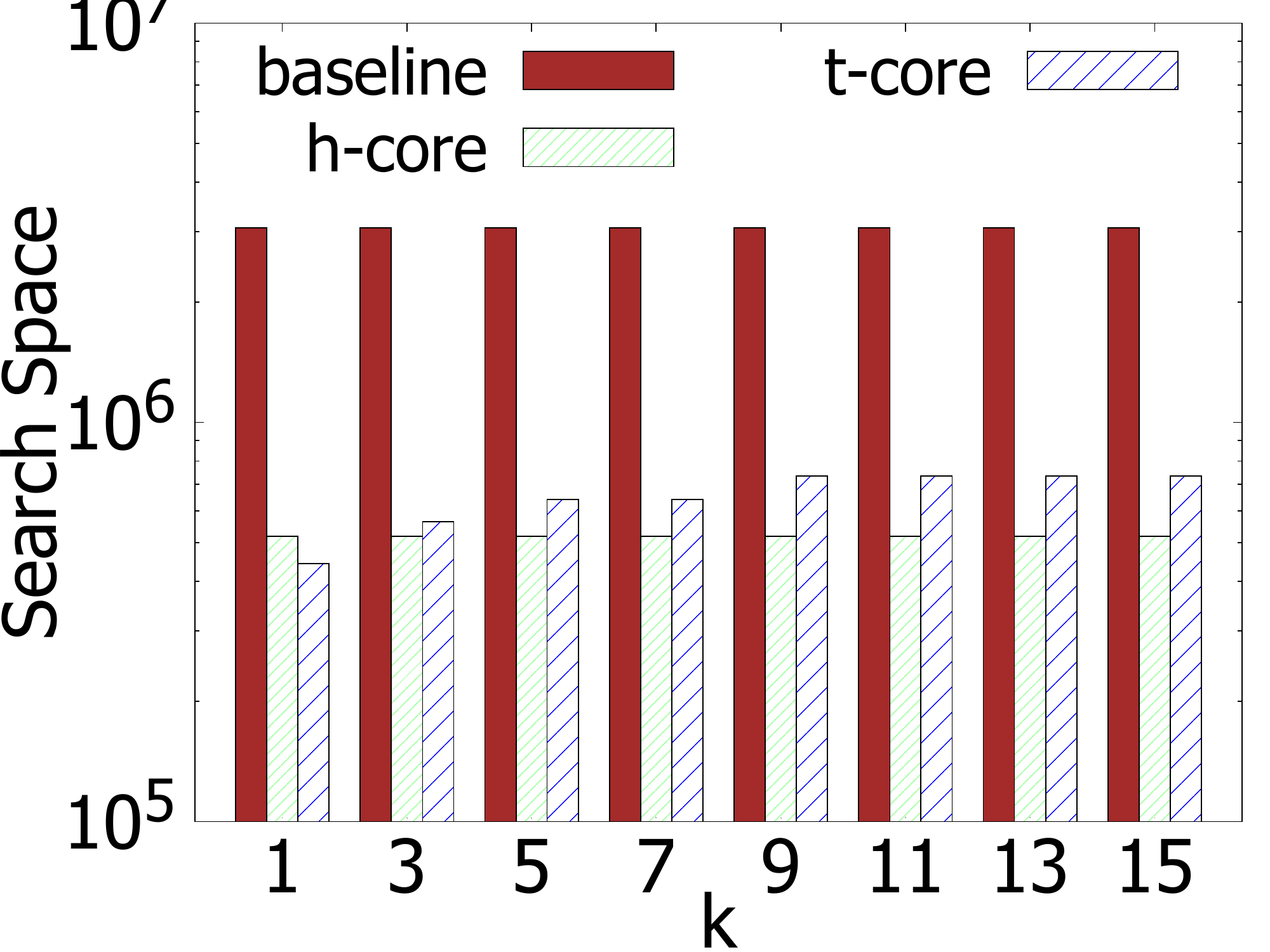}} 
% \subfigure[Gowalla]{\includegraphics[width=0.22\linewidth]{Figure/exp/Gowalla/Compare_SP.pdf}} \quad
 }
 \vspace{-0.6cm}
\caption{Comparsion of \baseline, \bound and \tcore in terms of search space.}
\vspace{-0.2cm}
\label{fig.sp_com}
\end{figure*}

\subsection{Efficiency Evaluation}

%In this subsection, we evaluate the efficiency of the proposed algorithms.

%\stitle{Exp-1 (Running time and search space comparison of \baseline and \bound)}
In this experiment, we compare the efficiency of \baseline, \bound and \tcore on four real-world datasets. For the \tcore method, we fix parameter $t=2$. We compare the running time and search space (i.e., the number of vertices whose structural diversity scores are computed in the search process). Fig.~\ref{fig.qt_com} shows the running time results of three methods varied by $k$.
% with various parameters $k$ and report their running time and search space, where the search space reflects the times that Algorithm \ref{alg:baseline-score} is invoked to compute the structural diversity score $h(v)$. Figure \ref{fig.qt_com} illustrates the running time of both methods on different datasets. 
It clearly shows that top-$k$ search algorithm \bound runs much faster than \baseline on all the reported datasets. Specifically, in Fig.~\ref{fig.qt_com}(c), \bound is 5 times faster than \baseline on Youtube in term of running time. Moreover, Fig.~\ref{fig.sp_com} further shows the search space of three methods varied on all datasets. We can observe that leveraging on the upper bound $\widehat{h}(v)$, a large number of disqualified vertices is pruned during the search process by \bound. The search space significantly shrinks into less than $\frac{1}{10}$ of vertex size in graphs. It verifies the tightness of our upper bound and the superiority of \bound against \baseline in efficiency. According to Fig.~\ref{fig.qt_com} and Fig.~\ref{fig.sp_com}, our \hcore is very comparative to the state-of-art method \tcore in terms of running time and search space.

\subsection{Sensitivity Evaluation}
This experiment evaluates the sensitivity of \tcore model. Given two different values of $t$, \tcore model may generate two different lists of top-$k$ ranking results. We use the Kendall rank tau distance to counts the number of pairwise disagreements between two  top-$k$ lists. The larger the distance, the more dissimilar the two lists, and also more sensitive the $t$-core model.  We adopt the Kendall distance with penalty, denoted by, 
$$K^{(p)}(\tau_1,\tau_2)=\sum_{\{i,j\}\in \mathcal{P}} \overline{K}_{i,j}^{(p)}(\tau_1,\tau_2) \nonumber$$
where $\mathcal{P}$ is the set of all unordered pairs of distinct elements in two top-$k$ list $\tau_1,\tau_2$ and $p$ is the penalty parameter. In our setting, we set $p=1$ and normalize the Kendall distance by the number of permutation $|\mathcal{P}|$. The values of normalized Kendall distance  range from 0 to 1.

We test the sensitivity of \tcore model by varying parameter $t$ in $\{2,4,6,8,10\}$. We compute the Kendall distance of two top-100 lists by \tcore model with two different $t$. The results of sensitivity heat matrix on four datasets are shown in Fig.~\ref{fig.kendalltau}. The darker colors reveal larger Kendall distances between two top-$k$ lists and also more sensitive of \tcore models on this pair of parameters $t$. Overall, sensitivity heat matrices are depicted in dark for most parameter settings on all datasets.  
%The results on all datasets demonstrate that the vertices in the top-$100$ results computed by different $t$ using \tcore are greatly different in their ranking and the presence in the lists. 
This reflects that the top-$k$ results computed by \tcore are very sensitive to the setting of parameter $t$, which has a bad robustness. It strongly indicates the necessity and importance of our parameter-free structural diversity  model.

\begin{figure*}[t]
\centering \mbox{
% \subfigure[DBLP]{\includegraphics[width=0.18\linewidth]{Figure/exp/dblp_graph/Compare_SP.pdf}} \quad
%\subfigure[AstroPh]{\includegraphics[width=0.20\linewidth]{Figure/exp/AstroPh/astroph_100_matrix.pdf}} 
\subfigure[Gowalla]{\includegraphics[width=0.25\linewidth]{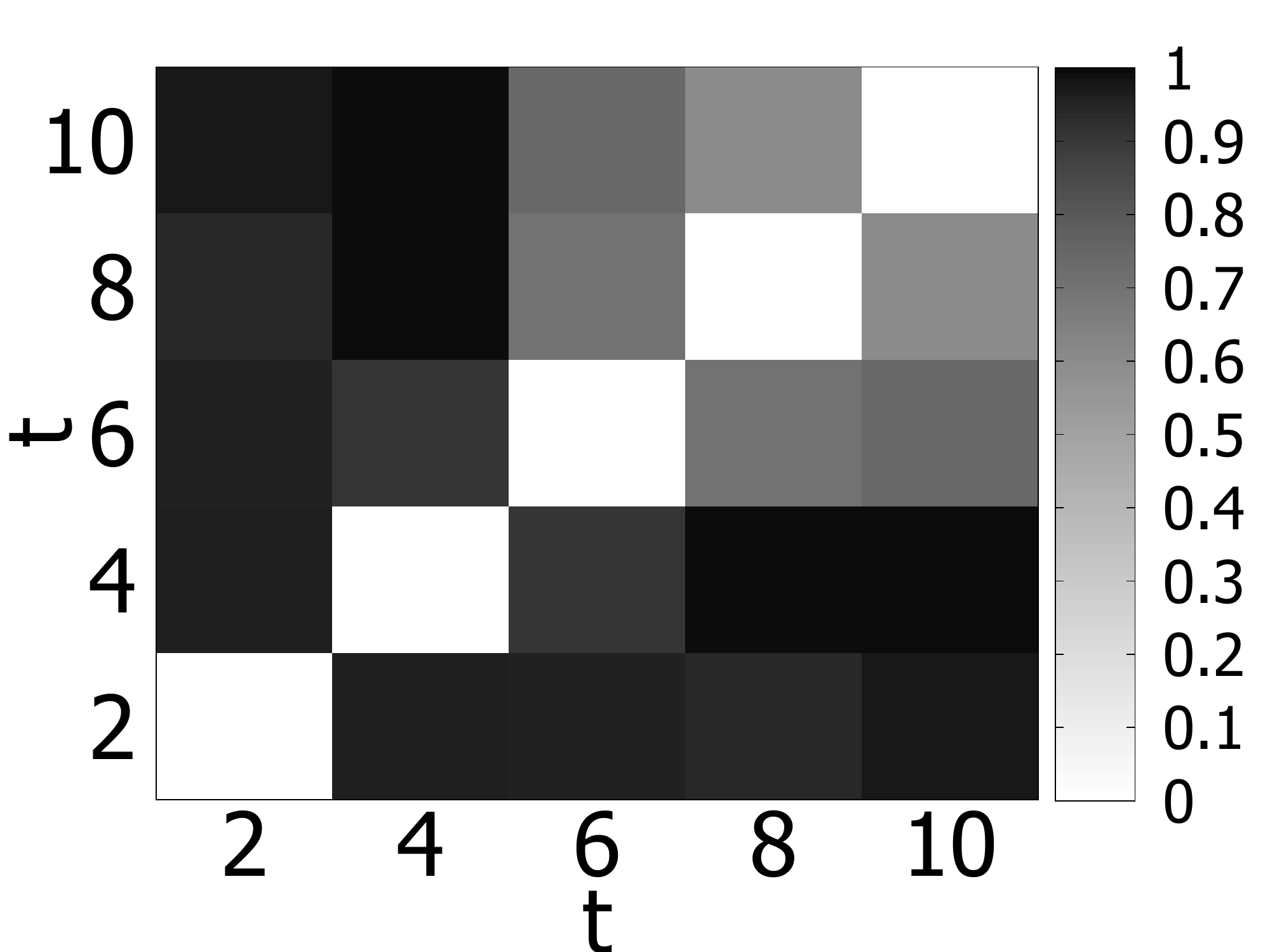}}
\subfigure[Youtube]{\includegraphics[width=0.25\linewidth]{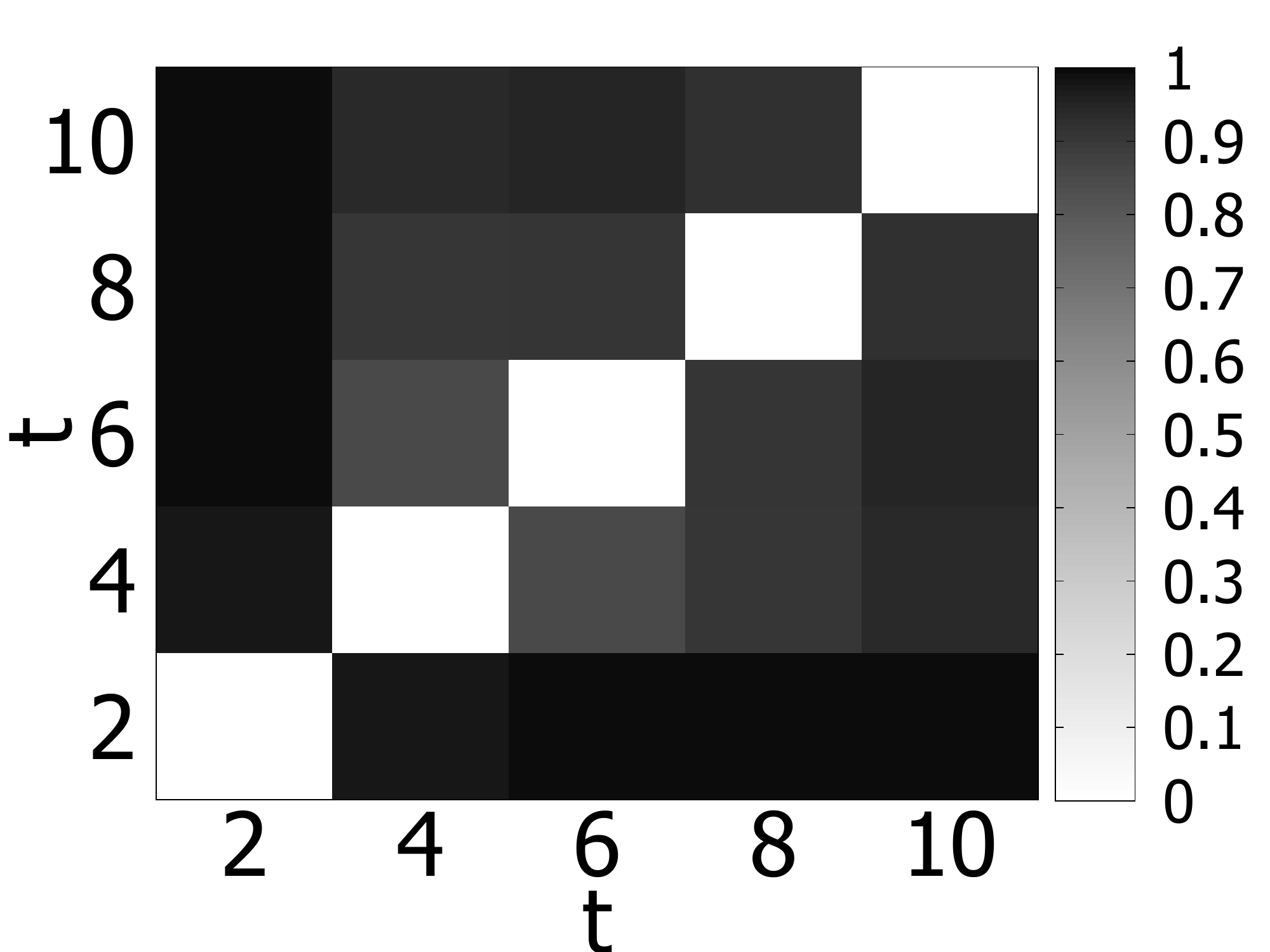}}
\subfigure[LiveJournal]{\includegraphics[width=0.25\linewidth]{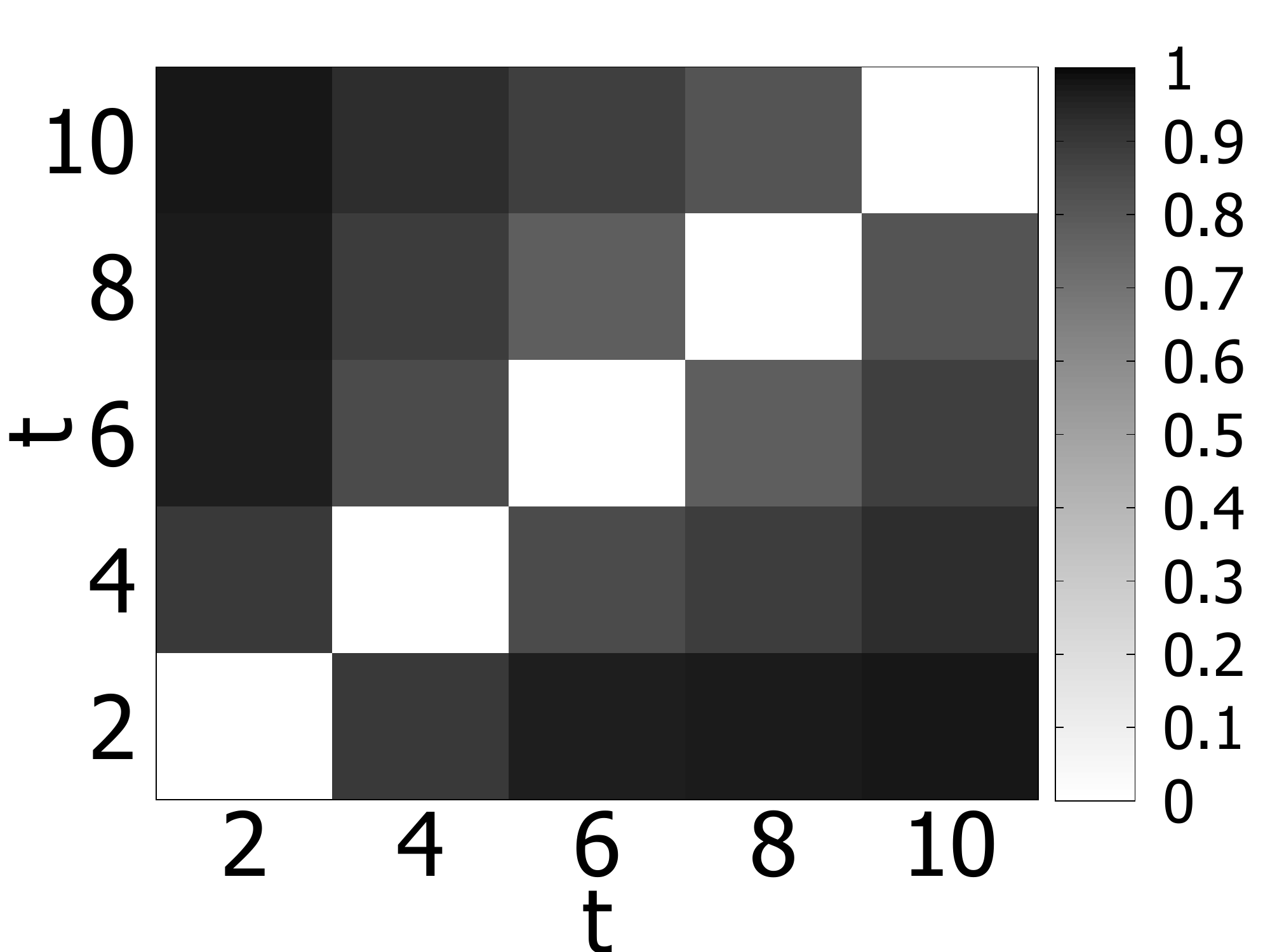}}
\subfigure[Orkut]{\includegraphics[width=0.25\linewidth]{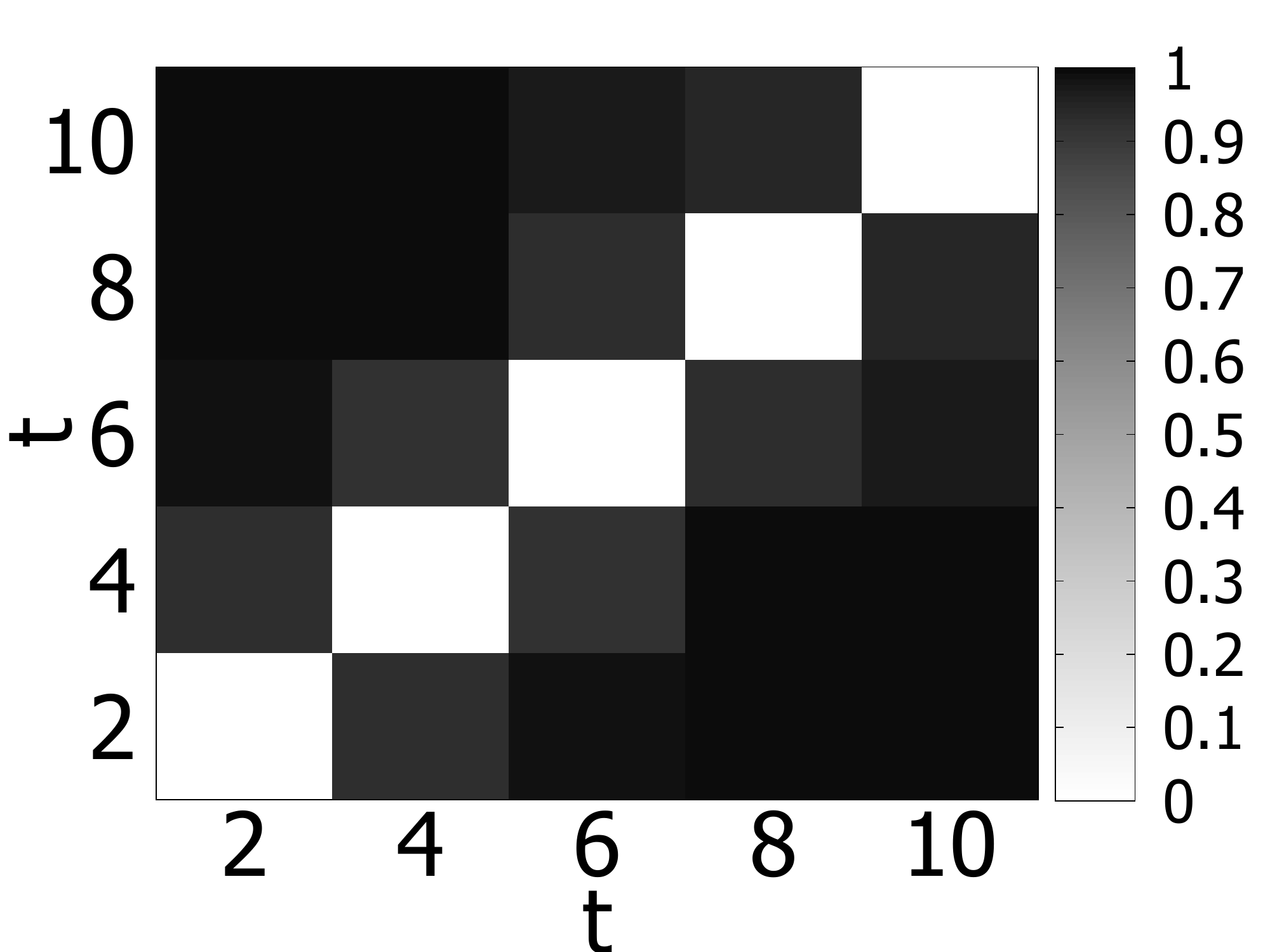}} 
% \subfigure[Gowalla]{\includegraphics[width=0.22\linewidth]{Figure/exp/Gowalla/Compare_SP.pdf}} \quad
 }
  \vspace*{-0.6cm}
\caption{Sensitivity heat matrices of \tcore model on all datasets. Each matrix element represents the Kendall's Tau distance between two top-100 ranking lists by \tcore model with different $t$.}
\vspace*{-0.4cm}
\label{fig.kendalltau}
\end{figure*}

\subsection{Effectiveness Evaluation}
In this experiment, we evaluate the effectiveness of our proposed $h$-index based structural diversity. We compare our method \hcore with state-of-the-art \tcore~\cite{HuangCLQY15} in the task of social contagion. Specifically, we adopt the independent cascade model to simulate the influence propagation process in graphs~\cite{GoyalLL11}. Influential probability of each edge is set to $0.01$. Then, we select 50 vertices as activated seeds by an influence maximization algorithm~\cite{TangSX15}. We perform 1000 times of Monte Carlos sampling for  propagation. For comparison, we count the number of activated vertices in the top-$k$ results by \tcore and \hcore methods. The method that achieves the largest number of activated vertices is regarded as the winner.  

First, we report the average activated rate by \hcore and \tcore method on  all four datasets in Fig.~\ref{fig.propagate}(a). Let  $D=$\{``Gowalla'',``Youtube'', ``LiveJournal'',``Orkut''\}. Given a dataset $d\in D$, the activated rate is defined as $f_k(d)=\frac{ActNum_k}{k}$, where $ActNum_k$ is the number of activated vertices in the top-$k$ result. The average activated rate is defined as $ActRate_k=\frac{\sum_{d\in D}f_k(d)}{|D|}$. Fig.~\ref{fig.propagate}(a) shows that our method \hcore achieves the highest activated rates, which significantly outperforms \tcore method for all different $t$. 
%those computed by \tcore with different $t$, 
It indicates that the top-$k$ results found by \hcore tend to have higher probability to be affected in social contagion. %This experiment verifies the rationality and the effectiveness of our $h$-index based structural diversity model.

%The effectiveness of our proposed $h$-index based structural diversity is evaluated in this subsection. We use the independent cascade model to simulate the influence propagation process \cite{GoyalLL11} which is similar to the way that information spreads among a network. An influence maximization algorithm \cite{TangSX15} is applied on each network to select 50 vertices as a activated seeds set. The influential probability of each edge is set to $p=0.01$ by default. After performing 10000 times Monte Carlos sampling, for all datasets, we report the average rate of vertices in the top-$k$ results obtained by \tcore and \hcore that get activated by the selected seeds in the influence propagation process.

\begin{figure*}[t]
\vspace*{-0.3cm}
\centering \mbox{
\subfigure[Average Activated Rate]{\includegraphics[width=0.3\linewidth]{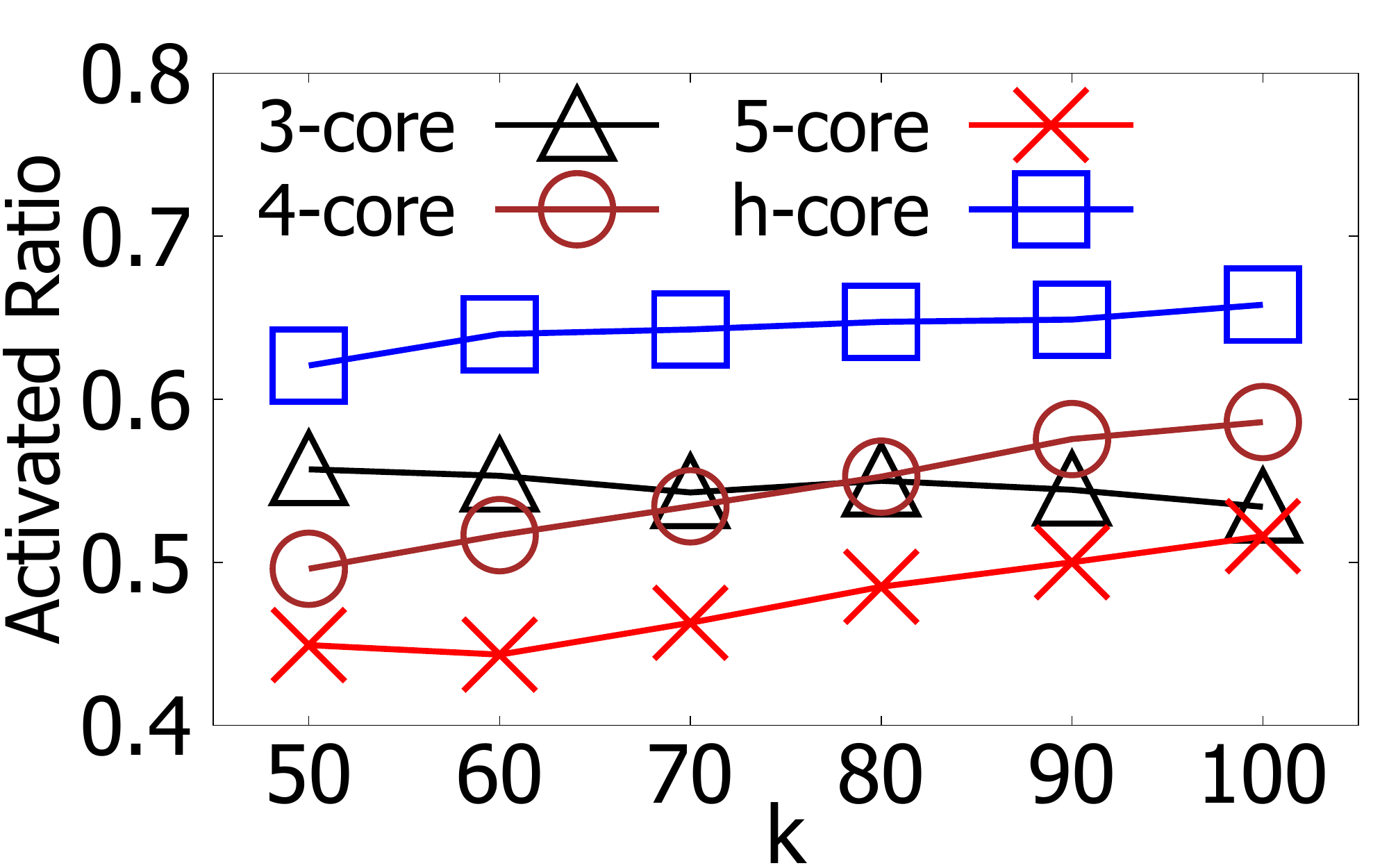}}
\hspace*{0.6cm}
\subfigure[Win Cases Distribution]{\includegraphics[width=0.3\linewidth]{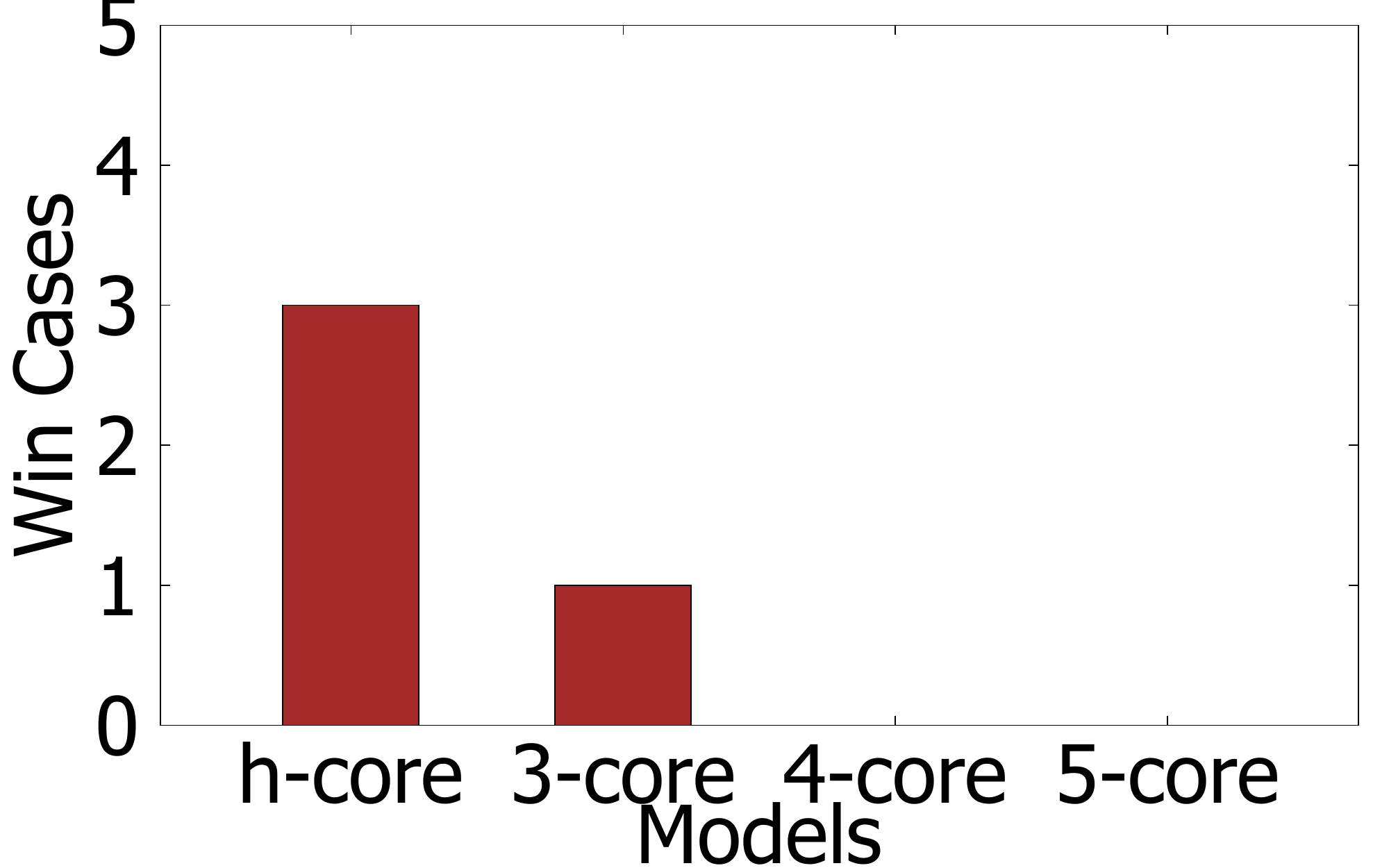}} }
 \vspace*{-0.4cm}
\caption{Comparison of \tcore and \hcore in terms of the average activated ratio and win cases on four datasets.}
\vspace*{-0.6cm}
\label{fig.propagate}
\end{figure*}

In addition, we also report the win cases of  \hcore and \tcore with different parameter $t$ on all dataset. We vary $t=\{2,3,4\}$ and set $k=100$ for all methods. The winner of a dataset is the method that achieves the highest number of activated vertices in this dataset. Fig.~\ref{fig.propagate}(b) shows the win cases of \tcore and \hcore. As we can see, \hcore wins on three datasets, which achieves the best performance. It further shows the superiority of our $h$-index structural diversity model. Besides, 3-core wins once, 4-core and 5-core win none, indicating that \tcore performs sensitively to parameter $t$.

\section{Conclusion}\label{sec.conclusion}
%\stitle{Conclusion.} 
In this paper, we propose a parameter-free structural diversity model based on $h$-index and study the top-$k$ structural diversity search problem. To solve the top-$k$ structural diversity search problem, an upper bound for the diversity score and a top-$k$ search framework for efficiently reducing the search space are proposed. Extensive experiments on real-wold datasets verify the efficiency of our pruning techniques and the effectiveness of our proposed $h$-index based structural diversity model.

\section*{Acknowledgments}
%This work was supported by the grants from the NSFC Nos. 61702435, 61972291, GRF 12200917, GRF 12200817, CRF C6030-18GF and the National Science Foundation of Hubei Province No. 2018CFB519.
This work is supported by the NSFC Nos. 61702435, 61972291, RGC Nos. 12200917,  12200817, CRF C6030-18GF, and the National Science Foundation of Hubei Province No. 2018CFB519.

% xin huang: NSFC 61702435 and 
% yuanyuan zhu: National Science Foundation of China  No. 61972291 and the National Science Foundation of Hubei Province No. 2018CFB519.
% jianliang xu: GRF 12200817 and CRF C6030-18GF
% \bibliographystyle{ACM-Reference-Format}
% \bibliography{truss}

\bibliographystyle{ACM-Reference-Format}
\bibliography{truss}

%%% -*-BibTeX-*-
%%% Do NOT edit. File created by BibTeX with style
%%% ACM-Reference-Format-Journals [18-Jan-2012].

\begin{thebibliography}{28}

%%% ====================================================================
%%% NOTE TO THE USER: you can override these defaults by providing
%%% customized versions of any of these macros before the \bibliography
%%% command.  Each of them MUST provide its own final punctuation,
%%% except for \shownote{}, \showDOI{}, and \showURL{}.  The latter two
%%% do not use final punctuation, in order to avoid confusing it with
%%% the Web address.
%%%
%%% To suppress output of a particular field, define its macro to expand
%%% to an empty string, or better, \unskip, like this:
%%%
%%% \newcommand{\showDOI}[1]{\unskip}   % LaTeX syntax
%%%
%%% \def \showDOI #1{\unskip}           % plain TeX syntax
%%%
%%% ====================================================================

\ifx \showCODEN    \undefined \def \showCODEN     #1{\unskip}     \fi
\ifx \showDOI      \undefined \def \showDOI       #1{#1}\fi
\ifx \showISBNx    \undefined \def \showISBNx     #1{\unskip}     \fi
\ifx \showISBNxiii \undefined \def \showISBNxiii  #1{\unskip}     \fi
\ifx \showISSN     \undefined \def \showISSN      #1{\unskip}     \fi
\ifx \showLCCN     \undefined \def \showLCCN      #1{\unskip}     \fi
\ifx \shownote     \undefined \def \shownote      #1{#1}          \fi
\ifx \showarticletitle \undefined \def \showarticletitle #1{#1}   \fi
\ifx \showURL      \undefined \def \showURL       {\relax}        \fi
% The following commands are used for tagged output and should be
% invisible to TeX
\providecommand\bibfield[2]{#2}
\providecommand\bibinfo[2]{#2}
\providecommand\natexlab[1]{#1}
\providecommand\showeprint[2][]{arXiv:#2}

\bibitem[\protect\citeauthoryear{Aridhi, Brugnara, Montresor, and
  Velegrakis}{Aridhi et~al\mbox{.}}{2016}]%
        {AridhiBMV16}
\bibfield{author}{\bibinfo{person}{Sabeur Aridhi}, \bibinfo{person}{Martin
  Brugnara}, \bibinfo{person}{Alberto Montresor}, {and} \bibinfo{person}{Yannis
  Velegrakis}.} \bibinfo{year}{2016}\natexlab{}.
\newblock \showarticletitle{Distributed k-core decomposition and maintenance in
  large dynamic graphs}. In \bibinfo{booktitle}{\emph{{DEBS}}}.
  \bibinfo{publisher}{{ACM}}, \bibinfo{pages}{161--168}.
\newblock


\bibitem[\protect\citeauthoryear{Batagelj and Zaversnik}{Batagelj and
  Zaversnik}{2003}]%
        {BatageljZ03}
\bibfield{author}{\bibinfo{person}{V. Batagelj} {and} \bibinfo{person}{M.
  Zaversnik}.} \bibinfo{year}{2003}\natexlab{}.
\newblock \showarticletitle{An O (m) algorithm for cores decomposition of
  networks}.
\newblock \bibinfo{journal}{\emph{arXiv preprint cs/0310049}}
  (\bibinfo{year}{2003}).
\newblock


\bibitem[\protect\citeauthoryear{Bonchi, Gullo, Kaltenbrunner, and
  Volkovich}{Bonchi et~al\mbox{.}}{2014}]%
        {BonchiGKV14}
\bibfield{author}{\bibinfo{person}{Francesco Bonchi},
  \bibinfo{person}{Francesco Gullo}, \bibinfo{person}{Andreas Kaltenbrunner},
  {and} \bibinfo{person}{Yana Volkovich}.} \bibinfo{year}{2014}\natexlab{}.
\newblock \showarticletitle{Core decomposition of uncertain graphs}. In
  \bibinfo{booktitle}{\emph{{KDD}}}. \bibinfo{publisher}{{ACM}},
  \bibinfo{pages}{1316--1325}.
\newblock


\bibitem[\protect\citeauthoryear{Chang, Zhang, Lin, and Qin}{Chang
  et~al\mbox{.}}{2017}]%
        {DBLP:conf/icde/ChangZLQ17}
\bibfield{author}{\bibinfo{person}{Lijun Chang}, \bibinfo{person}{Chen Zhang},
  \bibinfo{person}{Xuemin Lin}, {and} \bibinfo{person}{Lu Qin}.}
  \bibinfo{year}{2017}\natexlab{}.
\newblock \showarticletitle{Scalable Top-K Structural Diversity Search}. In
  \bibinfo{booktitle}{\emph{{ICDE}}}. \bibinfo{pages}{95--98}.
\newblock


\bibitem[\protect\citeauthoryear{Cheng, Zhong, Wang, and Qian}{Cheng
  et~al\mbox{.}}{2019}]%
        {cheng2019keyword}
\bibfield{author}{\bibinfo{person}{Huanyu Cheng}, \bibinfo{person}{Ming Zhong},
  \bibinfo{person}{Jian Wang}, {and} \bibinfo{person}{Tieyun Qian}.}
  \bibinfo{year}{2019}\natexlab{}.
\newblock \showarticletitle{Keyword Search Based Mashup Construction with
  Guaranteed Diversity}. In \bibinfo{booktitle}{\emph{DEXA}}.
  \bibinfo{pages}{423--433}.
\newblock


\bibitem[\protect\citeauthoryear{Cheng, Ke, Chu, and {\"O}zsu}{Cheng
  et~al\mbox{.}}{2011}]%
        {JCheng11}
\bibfield{author}{\bibinfo{person}{James Cheng}, \bibinfo{person}{Yiping Ke},
  \bibinfo{person}{Shumo Chu}, {and} \bibinfo{person}{M.~Tamer {\"O}zsu}.}
  \bibinfo{year}{2011}\natexlab{}.
\newblock \showarticletitle{Efficient core decomposition in massive networks}.
  In \bibinfo{booktitle}{\emph{ICDE}}. \bibinfo{pages}{51--62}.
\newblock


\bibitem[\protect\citeauthoryear{Chiba and Nishizeki}{Chiba and
  Nishizeki}{1985}]%
        {ChibaN85}
\bibfield{author}{\bibinfo{person}{Norishige Chiba} {and}
  \bibinfo{person}{Takao Nishizeki}.} \bibinfo{year}{1985}\natexlab{}.
\newblock \showarticletitle{Arboricity and Subgraph Listing Algorithms}.
\newblock \bibinfo{journal}{\emph{SIAM J. Comput.}} \bibinfo{volume}{14},
  \bibinfo{number}{1} (\bibinfo{year}{1985}), \bibinfo{pages}{210--223}.
\newblock


\bibitem[\protect\citeauthoryear{Ding and Zhuang}{Ding and Zhuang}{2018}]%
        {DBLP:journals/apin/DingZ18}
\bibfield{author}{\bibinfo{person}{Fei Ding} {and} \bibinfo{person}{Yi
  Zhuang}.} \bibinfo{year}{2018}\natexlab{}.
\newblock \showarticletitle{Ego-network probabilistic graphical model for
  discovering on-line communities}.
\newblock \bibinfo{journal}{\emph{Appl. Intell.}} \bibinfo{volume}{48},
  \bibinfo{number}{9} (\bibinfo{year}{2018}), \bibinfo{pages}{3038--3052}.
\newblock


\bibitem[\protect\citeauthoryear{Galimberti, Bonchi, and Gullo}{Galimberti
  et~al\mbox{.}}{2017}]%
        {GalimbertiBG17}
\bibfield{author}{\bibinfo{person}{Edoardo Galimberti},
  \bibinfo{person}{Francesco Bonchi}, {and} \bibinfo{person}{Francesco Gullo}.}
  \bibinfo{year}{2017}\natexlab{}.
\newblock \showarticletitle{Core Decomposition and Densest Subgraph in
  Multilayer Networks}. In \bibinfo{booktitle}{\emph{{CIKM}}}.
  \bibinfo{publisher}{{ACM}}, \bibinfo{pages}{1807--1816}.
\newblock


\bibitem[\protect\citeauthoryear{Goyal, Lu, and Lakshmanan}{Goyal
  et~al\mbox{.}}{2011}]%
        {GoyalLL11}
\bibfield{author}{\bibinfo{person}{Amit Goyal}, \bibinfo{person}{Wei Lu}, {and}
  \bibinfo{person}{Laks V.~S. Lakshmanan}.} \bibinfo{year}{2011}\natexlab{}.
\newblock \showarticletitle{{CELF++:} optimizing the greedy algorithm for
  influence maximization in social networks}. In
  \bibinfo{booktitle}{\emph{{WWW}}}. \bibinfo{pages}{47--48}.
\newblock


\bibitem[\protect\citeauthoryear{Hirsch}{Hirsch}{2005}]%
        {hirsch2005index}
\bibfield{author}{\bibinfo{person}{Jorge~E Hirsch}.}
  \bibinfo{year}{2005}\natexlab{}.
\newblock \showarticletitle{An index to quantify an individual's scientific
  research output}.
\newblock \bibinfo{journal}{\emph{Proceedings of the National academy of
  Sciences}} \bibinfo{volume}{102}, \bibinfo{number}{46}
  (\bibinfo{year}{2005}), \bibinfo{pages}{16569--16572}.
\newblock


\bibitem[\protect\citeauthoryear{Huang, Cheng, Li, Qin, and Yu}{Huang
  et~al\mbox{.}}{2015}]%
        {HuangCLQY15}
\bibfield{author}{\bibinfo{person}{Xin Huang}, \bibinfo{person}{Hong Cheng},
  \bibinfo{person}{Rong{-}Hua Li}, \bibinfo{person}{Lu Qin}, {and}
  \bibinfo{person}{Jeffrey~Xu Yu}.} \bibinfo{year}{2015}\natexlab{}.
\newblock \showarticletitle{Top-K structural diversity search in large
  networks}.
\newblock \bibinfo{journal}{\emph{{VLDB} J.}} \bibinfo{volume}{24},
  \bibinfo{number}{3} (\bibinfo{year}{2015}), \bibinfo{pages}{319--343}.
\newblock


\bibitem[\protect\citeauthoryear{Huang, Cheng, Li, Qin, and Yu}{Huang
  et~al\mbox{.}}{2013}]%
        {HuangCLQY13}
\bibfield{author}{\bibinfo{person}{Xin Huang}, \bibinfo{person}{Hong Cheng},
  \bibinfo{person}{Rong-Hua Li}, \bibinfo{person}{Lu Qin}, {and}
  \bibinfo{person}{Jeffrey~Xu Yu}.} \bibinfo{year}{2013}\natexlab{}.
\newblock \showarticletitle{Top-K Structural Diversity Search in Large
  Networks}.
\newblock \bibinfo{journal}{\emph{PVLDB}} \bibinfo{volume}{6},
  \bibinfo{number}{13} (\bibinfo{year}{2013}), \bibinfo{pages}{1618--1629}.
\newblock


\bibitem[\protect\citeauthoryear{Huang, Lakshmanan, and Xu}{Huang
  et~al\mbox{.}}{2019}]%
        {hlx2019community}
\bibfield{author}{\bibinfo{person}{Xin Huang}, \bibinfo{person}{Laks~VS
  Lakshmanan}, {and} \bibinfo{person}{Jianliang Xu}.}
  \bibinfo{year}{2019}\natexlab{}.
\newblock \bibinfo{booktitle}{\emph{Community search over big graphs}}.
\newblock \bibinfo{publisher}{Morgan \& Claypool Publishers}.
\newblock


\bibitem[\protect\citeauthoryear{Huckfeldt and Sprague}{Huckfeldt and
  Sprague}{1995}]%
        {huckfeldt1995citizens}
\bibfield{author}{\bibinfo{person}{R~Robert Huckfeldt} {and}
  \bibinfo{person}{John Sprague}.} \bibinfo{year}{1995}\natexlab{}.
\newblock \bibinfo{booktitle}{\emph{Citizens, politics and social
  communication: Information and influence in an election campaign}}.
\newblock \bibinfo{publisher}{Cambridge University Press}.
\newblock


\bibitem[\protect\citeauthoryear{Jakma, Orczyk, Perkins, and Fayed}{Jakma
  et~al\mbox{.}}{2012}]%
        {JakmaOPF12}
\bibfield{author}{\bibinfo{person}{Paul Jakma}, \bibinfo{person}{Marcin
  Orczyk}, \bibinfo{person}{Colin~S. Perkins}, {and} \bibinfo{person}{Marwan
  Fayed}.} \bibinfo{year}{2012}\natexlab{}.
\newblock \showarticletitle{Distributed k-core decomposition of dynamic
  graphs}. In \bibinfo{booktitle}{\emph{StudentWorkshop@CoNEXT}}.
  \bibinfo{publisher}{{ACM}}, \bibinfo{pages}{39--40}.
\newblock


\bibitem[\protect\citeauthoryear{Kempe, Kleinberg, and Tardos}{Kempe
  et~al\mbox{.}}{2003}]%
        {DBLP:conf/kdd/KempeKT03}
\bibfield{author}{\bibinfo{person}{David Kempe}, \bibinfo{person}{Jon~M.
  Kleinberg}, {and} \bibinfo{person}{{\'{E}}va Tardos}.}
  \bibinfo{year}{2003}\natexlab{}.
\newblock \showarticletitle{Maximizing the spread of influence through a social
  network}. In \bibinfo{booktitle}{\emph{KDD}}. \bibinfo{pages}{137--146}.
\newblock


\bibitem[\protect\citeauthoryear{Leskovec and Krevl}{Leskovec and
  Krevl}{2014}]%
        {snapnets}
\bibfield{author}{\bibinfo{person}{Jure Leskovec} {and} \bibinfo{person}{Andrej
  Krevl}.} \bibinfo{year}{2014}\natexlab{}.
\newblock \bibinfo{title}{{SNAP Datasets}: {Stanford} Large Network Dataset
  Collection}.
\newblock \bibinfo{howpublished}{\url{http://snap.stanford.edu/data}}.
\newblock


\bibitem[\protect\citeauthoryear{Levorato}{Levorato}{2014}]%
        {Levorato14}
\bibfield{author}{\bibinfo{person}{Vincent Levorato}.}
  \bibinfo{year}{2014}\natexlab{}.
\newblock \showarticletitle{Core Decomposition in Directed Networks:
  Kernelization and Strong Connectivity}. In
  \bibinfo{booktitle}{\emph{CompleNet}}, Vol.~\bibinfo{volume}{549}.
  \bibinfo{pages}{129--140}.
\newblock


\bibitem[\protect\citeauthoryear{Li, Yu, and Mao}{Li et~al\mbox{.}}{2014}]%
        {LiYM14}
\bibfield{author}{\bibinfo{person}{Rong{-}Hua Li}, \bibinfo{person}{Jeffrey~Xu
  Yu}, {and} \bibinfo{person}{Rui Mao}.} \bibinfo{year}{2014}\natexlab{}.
\newblock \showarticletitle{Efficient Core Maintenance in Large Dynamic
  Graphs}.
\newblock \bibinfo{journal}{\emph{TKDE}} \bibinfo{volume}{26},
  \bibinfo{number}{10} (\bibinfo{year}{2014}), \bibinfo{pages}{2453--2465}.
\newblock


\bibitem[\protect\citeauthoryear{Mcauley and Leskovec}{Mcauley and
  Leskovec}{2014}]%
        {mcauley2014discovering}
\bibfield{author}{\bibinfo{person}{Julian Mcauley} {and} \bibinfo{person}{Jure
  Leskovec}.} \bibinfo{year}{2014}\natexlab{}.
\newblock \showarticletitle{Discovering social circles in ego networks}.
\newblock \bibinfo{journal}{\emph{TKDD}} \bibinfo{volume}{8},
  \bibinfo{number}{1} (\bibinfo{year}{2014}), \bibinfo{pages}{4}.
\newblock


\bibitem[\protect\citeauthoryear{Montresor, Pellegrini, and Miorandi}{Montresor
  et~al\mbox{.}}{2013}]%
        {MontresorPM13}
\bibfield{author}{\bibinfo{person}{Alberto Montresor},
  \bibinfo{person}{Francesco~De Pellegrini}, {and} \bibinfo{person}{Daniele
  Miorandi}.} \bibinfo{year}{2013}\natexlab{}.
\newblock \showarticletitle{Distributed k-Core Decomposition}.
\newblock \bibinfo{journal}{\emph{TPDS}} \bibinfo{volume}{24},
  \bibinfo{number}{2} (\bibinfo{year}{2013}), \bibinfo{pages}{288--300}.
\newblock


\bibitem[\protect\citeauthoryear{Sar{\i}y{\"u}ce, Gedik, Jacques-Silva, Wu, and
  \c{C}ataly{\"u}rek}{Sar{\i}y{\"u}ce et~al\mbox{.}}{2013}]%
        {SariyuceGJWC13}
\bibfield{author}{\bibinfo{person}{Ahmet~Erdem Sar{\i}y{\"u}ce},
  \bibinfo{person}{Bugra Gedik}, \bibinfo{person}{Gabriela Jacques-Silva},
  \bibinfo{person}{Kun-Lung Wu}, {and} \bibinfo{person}{{\"U}mit~V.
  \c{C}ataly{\"u}rek}.} \bibinfo{year}{2013}\natexlab{}.
\newblock \showarticletitle{Streaming Algorithms for k-core Decomposition}.
\newblock \bibinfo{journal}{\emph{PVLDB}} \bibinfo{volume}{6},
  \bibinfo{number}{6} (\bibinfo{year}{2013}), \bibinfo{pages}{433--444}.
\newblock


\bibitem[\protect\citeauthoryear{Tang, Shi, and Xiao}{Tang
  et~al\mbox{.}}{2015}]%
        {TangSX15}
\bibfield{author}{\bibinfo{person}{Youze Tang}, \bibinfo{person}{Yanchen Shi},
  {and} \bibinfo{person}{Xiaokui Xiao}.} \bibinfo{year}{2015}\natexlab{}.
\newblock \showarticletitle{Influence Maximization in Near-Linear Time: {A}
  Martingale Approach}. In \bibinfo{booktitle}{\emph{{SIGMOD} Conference}}.
  \bibinfo{publisher}{{ACM}}, \bibinfo{pages}{1539--1554}.
\newblock


\bibitem[\protect\citeauthoryear{Ugander, Backstrom, Marlow, and
  Kleinberg}{Ugander et~al\mbox{.}}{2012}]%
        {UganderBMK12}
\bibfield{author}{\bibinfo{person}{J. Ugander}, \bibinfo{person}{L. Backstrom},
  \bibinfo{person}{C. Marlow}, {and} \bibinfo{person}{J. Kleinberg}.}
  \bibinfo{year}{2012}\natexlab{}.
\newblock \showarticletitle{Structural diversity in social contagion}.
\newblock \bibinfo{journal}{\emph{PNAS}} \bibinfo{volume}{109},
  \bibinfo{number}{16} (\bibinfo{year}{2012}), \bibinfo{pages}{5962--5966}.
\newblock


\bibitem[\protect\citeauthoryear{Wen, Qin, Zhang, Lin, and Yu}{Wen
  et~al\mbox{.}}{2019}]%
        {WenQZLY19}
\bibfield{author}{\bibinfo{person}{Dong Wen}, \bibinfo{person}{Lu Qin},
  \bibinfo{person}{Ying Zhang}, \bibinfo{person}{Xuemin Lin}, {and}
  \bibinfo{person}{Jeffrey~Xu Yu}.} \bibinfo{year}{2019}\natexlab{}.
\newblock \showarticletitle{{I/O} Efficient Core Graph Decomposition:
  Application to Degeneracy Ordering}.
\newblock \bibinfo{journal}{\emph{TKDE}} \bibinfo{volume}{31},
  \bibinfo{number}{1} (\bibinfo{year}{2019}), \bibinfo{pages}{75--90}.
\newblock


\bibitem[\protect\citeauthoryear{Wu, Cheng, Lu, Ke, Huang, Yan, and Wu}{Wu
  et~al\mbox{.}}{2015}]%
        {WuCLKHYW15}
\bibfield{author}{\bibinfo{person}{Huanhuan Wu}, \bibinfo{person}{James Cheng},
  \bibinfo{person}{Yi Lu}, \bibinfo{person}{Yiping Ke}, \bibinfo{person}{Yuzhen
  Huang}, \bibinfo{person}{Da Yan}, {and} \bibinfo{person}{Hejun Wu}.}
  \bibinfo{year}{2015}\natexlab{}.
\newblock \showarticletitle{Core decomposition in large temporal graphs}. In
  \bibinfo{booktitle}{\emph{BigData}}. \bibinfo{pages}{649--658}.
\newblock


\bibitem[\protect\citeauthoryear{Zhang, Yu, Zhang, and Qin}{Zhang
  et~al\mbox{.}}{2017}]%
        {ZhangYZQ17}
\bibfield{author}{\bibinfo{person}{Yikai Zhang}, \bibinfo{person}{Jeffrey~Xu
  Yu}, \bibinfo{person}{Ying Zhang}, {and} \bibinfo{person}{Lu Qin}.}
  \bibinfo{year}{2017}\natexlab{}.
\newblock \showarticletitle{A Fast Order-Based Approach for Core Maintenance}.
  In \bibinfo{booktitle}{\emph{{ICDE}}}. \bibinfo{pages}{337--348}.
\newblock


\end{thebibliography}

\end{document}